\def\cl@chapter{\@elt {chapter}}
\newcommand{\labelx}[1]{
    \relax
    \ifmmode
        \label{#1} 
    \else 
        \ifnum\pdfstrcmp{\@currenvir}{document}=0
            \label{#1}
        \else
            \label[\@currenvir]{#1}
        \fi
    \fi
}
\newlength\savewidth
\newlength\savedwidth
\newcommand{\bE}{\mathbb E}
\newcommand{\bS}{\mathbb S}
\newcommand{\bR}{\mathbb R}
\newcommand{\bZ}{\mathbb Z}
\newcommand{\bM}{\mathbb M}
\newcommand{\OLP}{\bm{\rho}}
\newcommand{\ks}{\mathfrak s}
\newcommand{\ky}{\mathfrak y}
\newcommand{\kB}{\mathfrak B}
\newcommand{\kL}{\mathfrak L}
\newcommand{\kF}{\mathfrak F}
\newcommand{\kH}{\mathfrak H}
\newcommand{\kT}{\mathfrak T}
\newcommand{\cS}{\mathcal S}
\newcommand{\cG}{\mathcal G}
\newcommand{\cH}{\mathcal H}
\newcommand{\cL}{\mathcal L}
\newcommand{\cF}{\mathcal F}
\newcommand{\cB}{\mathcal B}
\newcommand{\cT}{\mathcal T}
\newcommand{\cO}{\mathcal O}
\newcommand{\cD}{\mathcal D}
\newcommand{\cN}{\mathcal N}
\newcommand{\cV}{\mathcal V}
\newcommand{\tka}{\tilde\kappa}
\def\<{\langle} 
\def\>{\rangle}
\def\sm{\setminus}
\newcommand{\vp}{\varphi}
\newcommand{\ve}{\varepsilon}
\newcommand{\fs}{\mathbf s}
\newcommand{\fx}{\mathbf x}
\newcommand{\fy}{\mathbf y}
\newcommand{\dkq}{{\dot{\mathfrak q}}}
\newcommand{\dn}{\dot{\mathrm{n}}}
\newcommand{\dx}{\dot x}
\newcommand{\dfx}{\dot{\mathbf x}}
\newcommand{\dfn}{{\dot{\mathbf{n}}}}
\newcommand{\dtheta}{\dot\theta}
\newcommand{\htheta}{\hat\theta}
\newcommand{\hfx}{\hat{\mathbf{x}}}
\newcommand{\hx}{\hat{x}}
\newcommand{\dfv}{{\dot{\mathbf v}}}
\newcommand{\dfe}{\dot{\mathbf e}}
\newcommand{\curPr}{\omega}
\DeclareMathOperator\sign{sign}
\DeclareMathOperator\Lip{Lip}
\DeclareMathOperator\JS{JS}
\journalname{Preprint}
\begin{document}

\title{Computing Geodesic Paths Encoding a Curvature Prior for Curvilinear Structure Tracking
}


\author{Da~Chen \and Jean-Marie~Mirebeau \and Minglei Shu \and Laurent~D.~Cohen
}


\institute{Da~Chen\at
Shandong Artificial Intelligence Institute, Qilu University of Technology (Shandong Academy of Sciences), 250014 Jinan, China.\\
\email{dachen.cn@hotmail.com}
\and Jean-Marie Mirebeau \at
Department of Mathematics, Centre Borelli, ENS Paris-Saclay, CNRS, University Paris-Saclay, 91190, Gif-sur-Yvette, France.\\
\email{jean-marie.mirebeau@math.u-psud.fr}\\
\and Minglei Shu \at
Shandong Artificial Intelligence Institute, Qilu University of Technology (Shandong Academy of Sciences), 250014 Jinan, China.\\
\email{shuml@sdas.org}\\   
\and Laurent~D.~Cohen \at
University Paris Dauphine, PSL Research University, CNRS, UMR 7534, CEREMADE, 75016 Paris, France\\ 
\email{cohen@ceremade.dauphine.fr}\\   
}

\date{Received: date / Accepted: date}

\maketitle
\begin{abstract}
In this paper, we introduce an efficient method for computing curves minimizing a variant of the Euler-Mumford elastica energy, with fixed endpoints and tangents at these endpoints, where the bending energy is enhanced with a user defined and data-driven scalar-valued term referred to as the curvature prior.
In order to guarantee that the globally optimal curve is extracted, the proposed method involves the numerical computation of the viscosity solution to a specific static Hamilton-Jacobi-Bellman (HJB) partial differential equation (PDE). For that purpose, we derive the explicit Hamiltonian associated to this variant model equipped with a curvature prior, discretize the resulting HJB PDE using an adaptive finite difference scheme, and  solve it in a single pass using a generalized Fast-Marching method. In addition, we also present a practical method for estimating the curvature prior values from image data, designed for the task of accurately tracking curvilinear structure centerlines. Numerical experiments on synthetic and real image data illustrate the advantages of the considered variant of the elastica model with a prior curvature enhancement in complex scenarios where challenging geometric structures appear. 
\keywords{Variant of the Euler-Mumford Elastica Model \and Curvature Prior  \and Second-order Geodesic Path  \and Fast-Marching Method  \and  Curvilinear Structure Tracking}
\end{abstract}

\section{Introduction}
Computing globally optimal geodesic curves, in domains equipped with complex metric cost functions, is a task efficiently achieved by solving the so-called eikonal equation, which is at the foundation of a line of works in the modeling of curvilinear shapes \citep{sethian1999fast,peyre2010geodesic}.
Second-order geodesic models featuring curvature regularization allow imposing particular geometric priors (e.g.\ rigidity and smoothness) to the minimal geodesic paths, whose computation hence has become a core step to numerous practical applications. Typical examples include image analysis~\citep{chen2017global,bekkers2015pde}, computer vision problems involving  visual curve completion~\citep{ben2014tangent,parent1989trace} and perceptual grouping~\citep{bekkers2018nilpotent,wertheimer1938laws}, the study of optical illusions~\citep{franceschiello2019geometrical},  and robot motion planing~\citep{mirebeau2019hamiltonian,kimmel2001optimal}.
This work is motivated by the fact that curvilinear structures exhibit significant geometric features such as orientation and curvature, which can be leveraged as crucial cues for computing geodesic paths to efficiently delineate these elongated shapes.
 
Approaches to computing globally optimal paths connecting two disjoint sets can be backtracked to Dijkstra's shortest path algorithm~\citep{dijkstra1959note} in a finite graph.  However, such a discrete setting may lead to metrication errors, especially for applications in image analysis. In order to address this issue,  Cohen and Kimmel introduced a minimal geodesic model~\citep{cohen1997global} founded in a continuous PDE domain, and established the connection between the minimization of an energy defined as a weighted curve length~\citep{caselles1997geodesic} and the viscosity solution to a static first-order HJB PDE. As a consequence, the computation of minimal geodesic paths is transferred to numerically solving the HJB PDE, which is efficiently addressed via Sethian's pioneering work~\citep{sethian1996fast,sethian1999fast}  the Fast-Marching method. 

The original geodesic model~\citep{cohen1997global} measures the length of regular curves in an isotropic manner, using weights depending on the local curve position but not on its orientation, which limits its practical applications. 
The introduction of anisotropic geometric models, and measures of path length depending on the orientation in a user-defined manner, is an important extension to the original geodesic model, 
which has deeply extended its applicable scope, leading to successful applications in image analysis and computer vision~\citep{melonakos2008finsler}.
For instance, curvilinear structure tracking can take advantage of anisotropic Riemannian models, by enhancing the geodesic metric with anisotropy features extracted from the elongated structures present in the processed image~\citep{parker2002estimating,benmansour2011tubular,jbabdi2008accurate}.
\cite{chen2016finsler} proposed to address the region-based active contour problem~\citep{mumford1989optimal} by computing the minimal geodesic path associated to a Randers metric~\citep{randers1941asymmetrical} featuring a data-driven drift term. 
For geodesic models involving anisotropic metrics, the classical Fast-Marching method~\citep{sethian1996fast} is no longer suitable for estimating the solutions to the respective HJB PDEs~\citep{sethian2001ordered,sethian2003ordered}. In order to bridge this gap, Mirebeau introduced variants of the Fast-Marching method which rely on adaptive stencil systems depending on the associated metrics, based on either the geometric tool of Lattice basis reduction~\citep{mirebeau2014efficient,mirebeau2014anisotropic} or the Voronoi's first reduction technique~\citep{mirebeau2019riemannian,mirebeau2018fast}. 
In practice, those state-of-the-art generalized anisotropic Fast-Marching methods can address various anisotropic geodesic models, while achieving sufficient accuracy and requiring a low computational cost.
 
\begin{figure*}[t]
\centering
\subfigure[]{\includegraphics[width=0.32\linewidth]{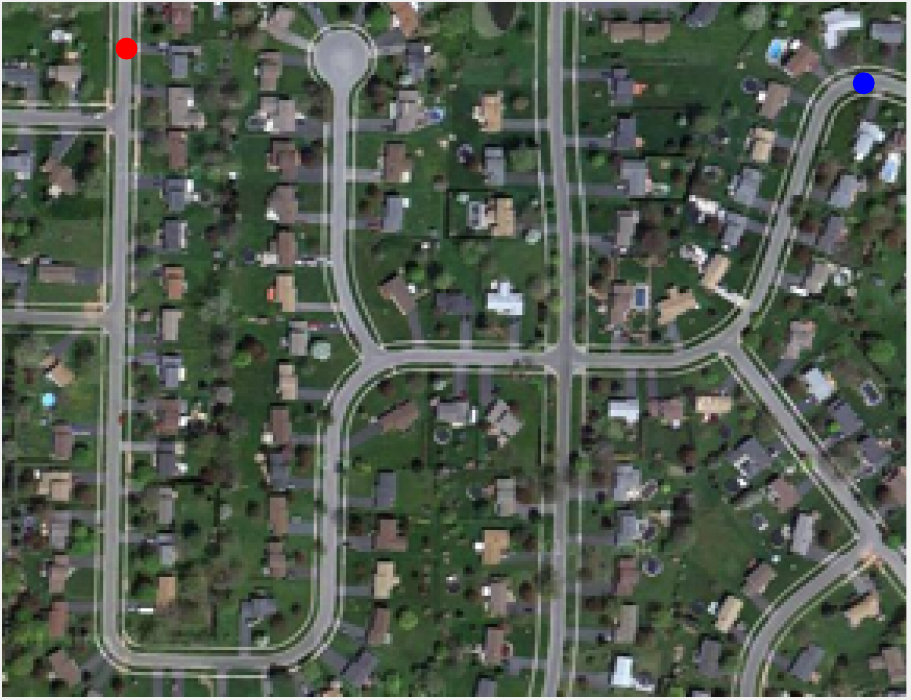}}\,\subfigure[]{\includegraphics[width=0.32\linewidth]{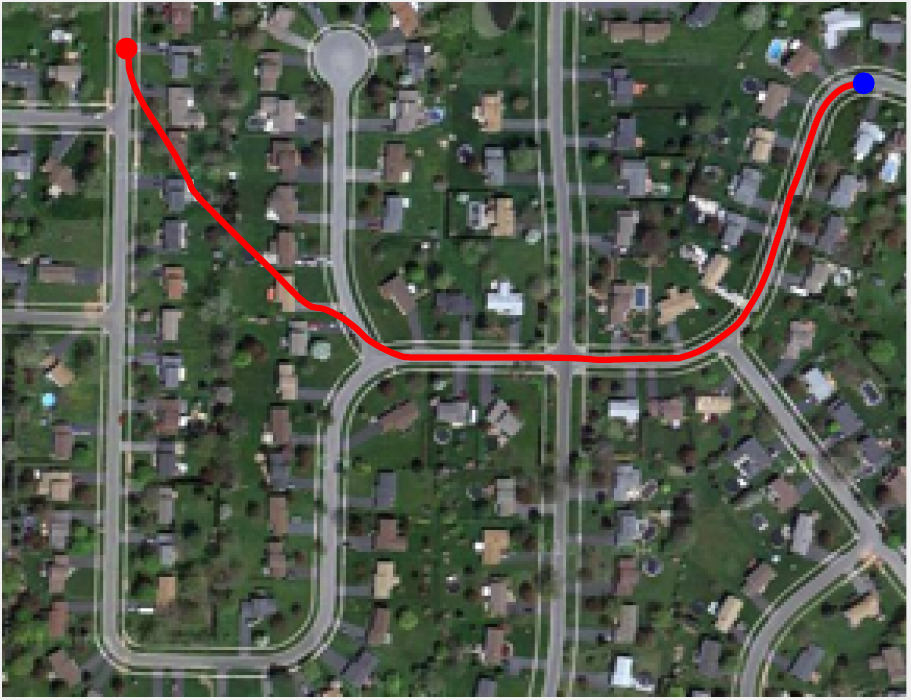}}\,\subfigure[]{\includegraphics[width=0.32\linewidth]{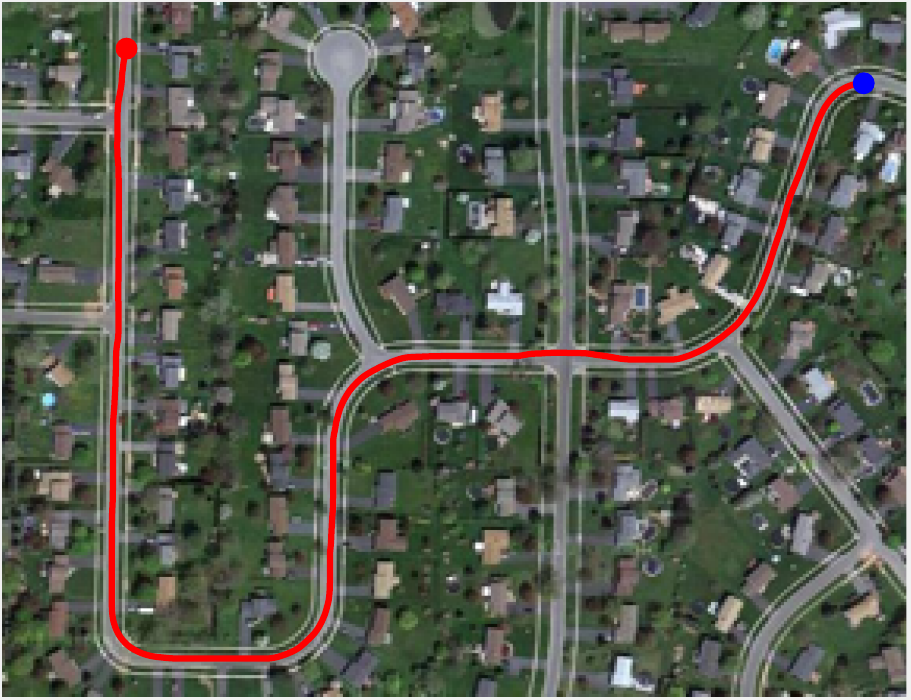}}
\caption{Illustration of the extraction of geodesic paths from an aerial image.
\textbf{a}  The original image, where the red and  blue dots represent the two endpoints of the target structure. \textbf{b} and \textbf{c} Physical projections of the minimal geodesic paths for the Euler-Mumford elastica model and the variant of the elastica model with a curvature prior enhancement respectively.}
\label{fig_CompRoad}
\end{figure*}

Second-order geodesic models, which use the path curvature as a regularizer of the weighted curve length, can be regarded as limit cases of anisotropic models defined over the configuration space of positions and orientations, see~\citep{chen2017global,duits2018optimal,bekkers2015pde,mirebeau2018fast} for significant examples. Following the optimal control framework, the Euler-Mumford elastica, Reeds–Shepp car and Dubins car problems are respectively connected to the associated HJB PDEs, yielding efficient Fast-Marching schemes for finding curvature-penalized optimal curves~\citep{mirebeau2018fast}. An extension of these models allows computing curves which are boundaries of convex shapes, by controlling the sign of the curvature and the winding number of the path~\citep{chen2021elastica}. In these works, the curvature penalization is regarded as a regularization term, whose importance is delicately balanced with an image data term, following the general principles of data fitting. 

In practice, unfortunately, striking a good balance between the curvature regularization and the image data terms remains a challenging problem for these second-order geodesic  models~\citep{chen2017global,duits2018optimal,mirebeau2018fast}, particularly in the presence of crossing structures, strongly bent segments, and complex image background content. In general, assigning high importance weights to curvature regularization usually corresponds to optimal paths favoring low curvature. On the other hand, such a setting may in turn introduce a bias to  the underlying strongly curved segments, thus giving rise to unexpected results. Fig.~\ref{fig_CompRoad} illustrates an example in an aerial image, where the goal is to extract a road joining two given points (red and blue dots), see Fig.~\ref{fig_CompRoad}a. In Fig.~\ref{fig_CompRoad}b, one can see that the geodesic path derived from the classical Euler-Mumford elastica model~\citep{chen2017global,mirebeau2018fast} suffers from a shortcut problem,  due to the unbalanced importance of the weights associated to the curvature regularization and to the image data. This shortcoming of the classical elastica model prevents its practical applications in complex scenarios. An alternative approach, explored in this paper and motivated by the drawback as mentioned above, is to compute geodesic paths whose curvature approximates some data-driven prior reference values and more generally to introduce data information in the new curvature term. 
An additional contribution is a collection of techniques to estimate these prior curvature values from the image data, and to extend and preprocess them for their use as an  enhancement to the geodesic model. Fig.~\ref{fig_CompRoad}c illustrates an optimal path from the variant of the elastica model, equipped with a suitable curvature prior which enables the accurate extraction of the target road.

In~\citep{mirebeau2019hamiltonian}, the authors illustrated how geodesic paths with a type of curvature prior could be computed using a modification of the Hamiltonian Fast-Marching (HFM) method~\citep{mirebeau2018fast,mirebeau2019riemannian}. However, the mathematical PDE framework, the construction of the numerical scheme, and the application scope of the geodesic models with embedded curvature prior were not investigated. In this paper, we describe the Hamiltonian of a variant of the Euler-Mumford elastica model featuring a curvature prior enhancement, and we illustrate related objects known as control sets. 
The Hamiltonian of this variant of the classical elastica model, referred to as the \emph{curvature prior elastica model}, is then discretized, leading to a numerical method for solving the corresponding static first-order HJB PDE, and thus computing the globally optimal geodesic paths. Finally, we introduce a practical method for estimating the curvature prior values, designed to improve the performance of tracking curvilinear structures from challenging image data. 

The remainder of this manuscript is organized as follows. We first review the classical Euler-Mumford elastica model, including the expression of the associated HJB PDE which is posed over the configuration space of positions and orientations,  and the computation of the relevant minimal geodesic paths. Then we present the core of this work: the derivation of the Hamiltonian of a variant of the elastica model equipped with a curvature prior, the relevant static HJB PDE,  and its implementation in the application of curvilinear structure tracking. Eventually, we present experimental results conducted on both synthetic and real images.

\section{The Euler-Mumford  Elastica Model}
\subsection{Background}
The celebrated Euler-Mumford elastica model \citep{mumford1994elastica,nitzberg19902} characterizes smooth curves which minimize a functional featuring a length term and a bending energy term, in other words a curvature penalization. 
Let  $\kappa:[0,1]\to\bR$ be the curvature of a smooth curve $\gamma:[0,1]\to\Omega$ with non-vanishing velocity, where $\Omega\subset\bR^2$ is a bounded domain. Given a data-driven cost function $\tilde\psi:\Omega\times\bR^2\to]0,\infty[$, the energy functional considered in the classical Euler-Mumford elastica model reads 
\begin{equation}
\label{eq_ElasticaEnergy} 
\int_0^1\tilde\psi\left(\gamma(u),\gamma^\prime(u)/\|\gamma^\prime(u)\|\right)(1+\beta^2\kappa(u)^2)\|\gamma^\prime(u)\|du,
\end{equation}
where $\gamma^\prime$ is the first-order derivative of the curve $\gamma$, and $\|\cdot\|$ denotes the standard Euclidean norm over the space $\bR^2$. 
The constant parameter $\beta>0$ controls the relative importance of the bending energy term $\kappa(u)^2$, which penalizes curvature, w.r.t.\ the penalization of path length. By construction, \cref{eq_ElasticaEnergy} is invariant under smooth increasing re-parametrizations of the curve $\gamma$, and the second argument of $\tilde \psi$ is the unit tangent  vector to the path. In the special case where $\tilde\psi$ is constant, the curves $\gamma$ for which \cref{eq_ElasticaEnergy} is extremal correspond to the rest positions of elastic rods of suitable length (with their endpoints clamped), however this physical interpretation is lost for non-constant $\tilde \psi$. 

\begin{figure*}[t]
\centering
\subfigure[]{\includegraphics[height=0.28\linewidth]{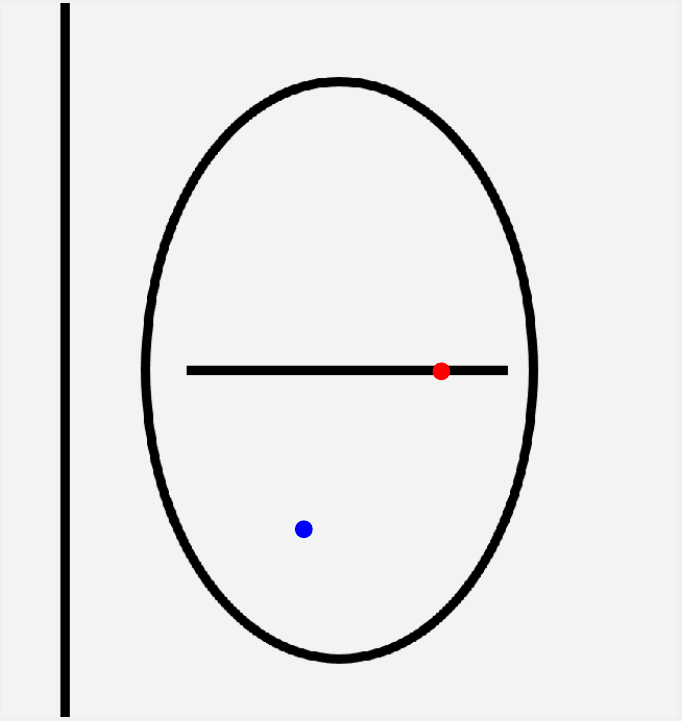}}\qquad\subfigure[]{\includegraphics[height=0.28\linewidth]{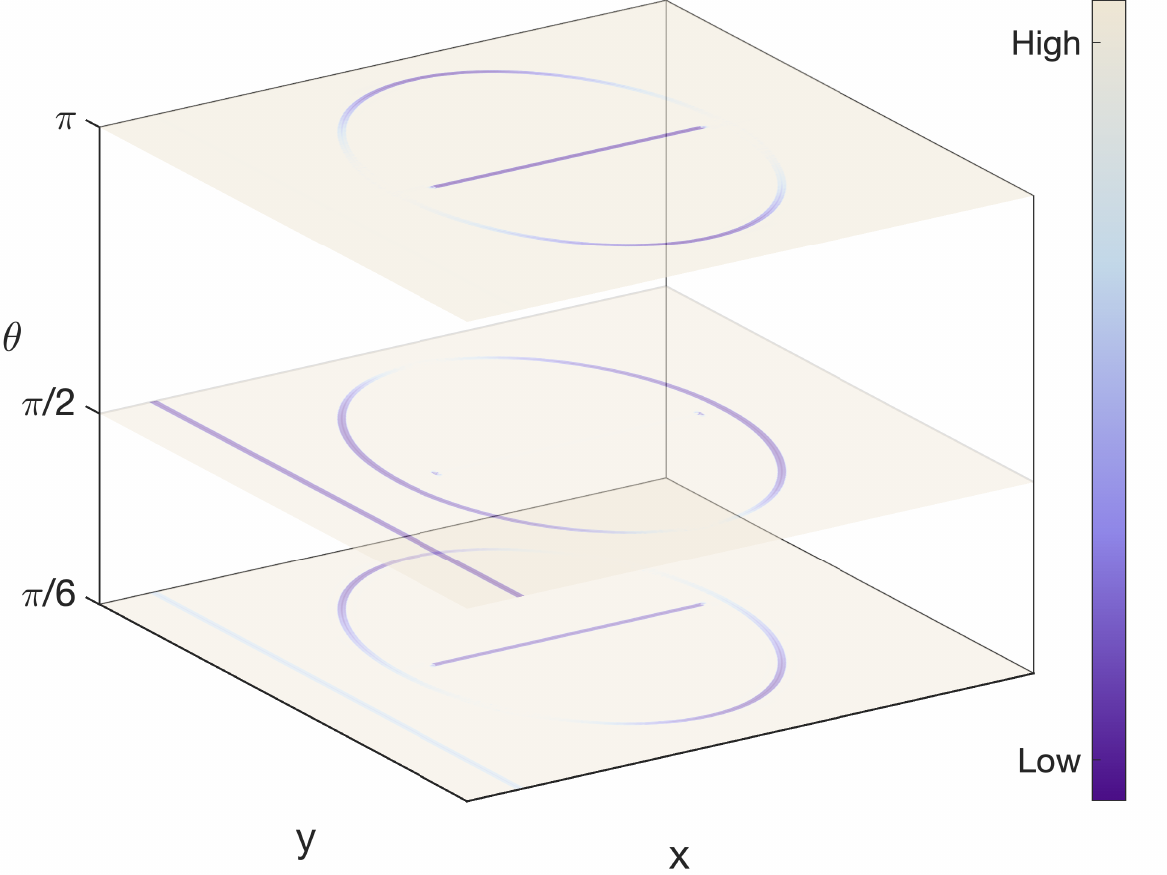}}\qquad\subfigure[]{\includegraphics[height=0.28\linewidth]{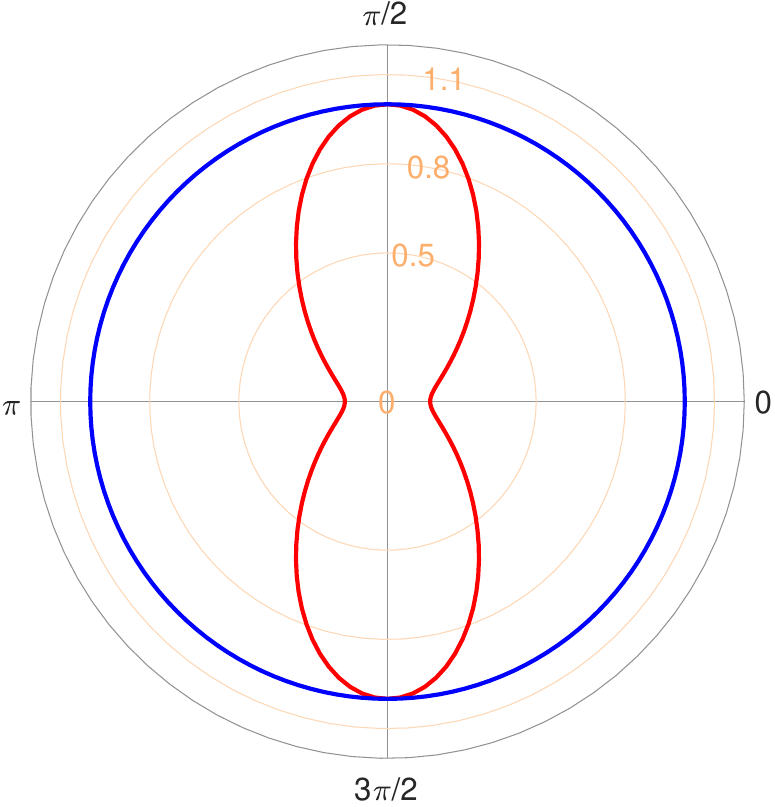}}
\caption{Visualizing the cost function $\psi$ in a synthetic example. \textbf{a} The original image consisting of curvilinear structures, where the red and blue dots indicate two sampled points. 
\textbf{b} Visualizing the cost function $\psi(\cdot,\theta)$ at three slices corresponding to the angles $\theta=\pi/6$, $\pi/2$ and $\pi$. 
\textbf{c} Visualizing the cost function $\psi(x,\cdot)$ at two sampled points in polar coordinates, where the red and blue lines correspond to plots of the values at the red and blue dots, respectively.
}	
\label{fig_OS}
\end{figure*}

\subsection{Reformulation via Orientation Lifting}
A widely used strategy for addressing the classical Euler-Mumford elastica optimal curve problem is to lift a planar smooth curve $\gamma$ into the higher dimensional configuration space $\bM:=\Omega\times\bS^1$ of positions and orientations, see~\citep{chen2017global,duits2018optimal,chambolle2019total}, where $\bS^1:=\bR/2\pi\bZ$ denotes the angular domain. In other words $\bS^1 = [0,2\pi[$ with periodic boundary conditions, and we can parametrize the unit circle via the unit vector $\dn(\theta):=(\cos\theta,\sin\theta)$ for any angle $\theta\in\bS^1$.
Furthermore, we denote by $\bE:=\bR^2\times\bR$ the tangent vector space to $\bM$, and by $\bE^*$ the dual vector space of $\bE$. 
Recall that $\gamma : [0,1] \to \Omega$ is a smooth curve with non-vanishing velocity.
Following the literature~\citep{chen2017global,mirebeau2018fast,duits2018optimal}, an orientation-lifted curve $\OLP=(\gamma,\eta):[0,1]\to\bM$, whose \emph{physical projection} is the given planar curve $\gamma$, can be constructed by introducing an \emph{angular function} $\eta : [0,1] \to \bS^1$ such that 
\begin{equation}
\label{eq_TangentCharacteristic}
\gamma^\prime(u)=\dn(\eta(u))\|\gamma^\prime(u)\|,\quad \forall u\in[0,1].
\end{equation}
This leads to an alternative representation of the curvature
\begin{equation}
\label{eq_Curvature}
\kappa(u)=\frac{\eta^\prime(u)}{\|\gamma^\prime(u)\|}.
\end{equation}
Inserting this expression of the curvature $\kappa$ into the energy defined in~\cref{eq_ElasticaEnergy} yields an equivalent formula, measured along an orientation-lifted curve $\OLP=(\gamma,\eta)$ whose components $\gamma$ and $\eta$ obey~\cref{eq_TangentCharacteristic}:
\begin{equation}
\label{eq_OLElasticaEnergy}
\cL(\OLP):=\int_0^1\psi(\gamma(u),\eta(u))\left(\|\gamma^\prime(u)\|+\frac{\beta^2\eta^\prime(u)^2}{\|\gamma^\prime(u)\|}\right)du,
\end{equation}
where $\psi:\bM\to]0,\infty[$ is a positive cost function defined by
\begin{equation*}
\psi(x,\theta) := \tilde\psi(x,\dn(\theta)).	
\end{equation*}
For the task of curvilinear structure tracking, as considered in this work, the value of  $\psi(x,\theta)$ should be relatively low in case the physical position $x$ is inside a curvilinear structure and  the direction $\dn(\theta)$ matches its local geometry at the position $x$, i.e.\ $\dn(\theta)$ is approximately tangent to its centerline at $x$.  Fig.~\ref{fig_OS} shows an example for the cost function $\psi$ derived from an image that consists of curvilinear structures. The construction and the numerical computation of $\psi$ is presented in~\cref{eq_CostFunction} below, and relies on a tool known as the orientation score.

\subsection{Elastica Metric and Control Sets}
The energy $\cL(\OLP)$ formulated in~\cref{eq_OLElasticaEnergy} can be expressed in terms of a degenerate geodesic metric~\citep{chen2017global,mirebeau2018fast}, denoted by $\cF:\bM\times\bE\to[0,+\infty]$ and referred to as the \emph{elastica metric}. For any point $\fx=(x,\theta)$ and any non-zero vector $\dot\fx=(\dx,\dtheta)$ this elastica metric is defined as 
\begin{equation}
\label{eq_ElasticaMetric}
\cF(\fx,\dot\fx):=
\begin{cases}
\|\dx\|+\displaystyle\frac{(\beta\dtheta)^2}{\|\dx\|}, &\text{if~}\dx=\dn(\theta)\|\dx\|,\\
+\infty,&\text{otherwise}.
\end{cases}
\end{equation}
Note that this metric is not symmetric, since $\cF(\fx,\dot \fx) \neq \cF(\fx,-\dot \fx)$ in general in view of the non-linear constraint $\dx=\dn(\theta)\|\dx\|$, and thus the corresponding distance defined below (see~\cref{eq_GDM}) is likewise non-symmetric. Such asymmetric objects are often referred to as quasi-metrics and quasi-distances, but the prefix ``quasi-'' is dropped in this paper for the sake of readability. 

Several mathematical objects can be attached to such a metric. 
The \emph{control sets} $\cB(\fx) \subset \bE$, for all points $\fx \in \bM$, allow visualizing the geometry, see Fig.~\ref{fig_ControlSet}, and are defined as 
\begin{equation}
\label{eqdef:control_set}
	\cB(\fx) := \left\{\dfx\in \bE \mid \cF(\fx,\dfx) \leq 1\right\}.
\end{equation}
The \emph{Hamiltonian} $\cH :\bM\times\bE^*\to [0,\infty[$ is involved in the numerical computation of optimal geodesic paths through the HJB PDE, as formulated in next subsection, and is defined for any co-vector $\hfx \in \bE^*$ as
\begin{equation}
\label{eqdef:hamiltonian}
\cH(\fx,\hfx) := \frac 1 2 \Big( \max_{\dfx \in \cB(\fx)} \<\hfx,\dfx\> \Big)^2. 
\end{equation} 

In the special case of the elastica metric, one can use \cref{eq_ElasticaMetric} to obtain the explicit expressions of the control sets and Hamiltonian, see \citep{mirebeau2018fast} for the details of this computation. Specifically, for any point $\fx=(x,\theta)\in\bM$ one has 
\begin{align}
\label{eq:elastica_control_set}
	&\cB(\fx) =\nonumber \\
	&\left\{(\dx,\dtheta) \in \bE \mid \dx = \dn(\theta)\|\dx\|, \left(\|\dx\|-\frac{1}{2}\right)^2 + \beta^2 \dtheta^2\leq\frac{1}{4}\right\}.
\end{align}
For any co-vector $\hfx=(\hx,\htheta)\in\bE^*$ one obtains
\begin{equation}
\label{eq_ElasticaHam1}	
\cH(\fx,\hfx)=
\frac{1}{8}\left(\<\hx,\dn(\theta) \>+\sqrt{\<\hx,\dn(\theta)\>^2+(\htheta/\beta)^2}\right)^2.
\end{equation}
The following equivalent expression of the Hamiltonian, in integral form, is established in \cite{mirebeau2018fast} 
\begin{equation}
\label{eq_ElasticaHam2}	
\cH(\fx,\hfx)=\frac{3}{8}\int_{-\pi/2}^{\pi/2}\<\hfx,(\dn(\theta)\cos\varphi,\beta^{-1}\sin\varphi)\>^2_+\cos\varphi\,d\varphi,
\end{equation}
where $\< \cdot,\cdot\>_+:=\max\{0,\<\cdot,\cdot \>\}$ is the positive part of the Euclidean scalar product $\<\cdot,\cdot\>$.
Here and below we denote the vector $\dfn(\theta,a):=(\dn(\theta),a)\in\bE$ for any angle $\theta \in \bS^1$ and any scalar $a \in \bR$, and observe that in particular $\<\hx,\dn(\theta)\> = \<\hfx,\dfn(\theta,0)\>$.

\subsection{Computing Elastica Geodesic Paths}
Consider a fixed source point $\fs\in\bM$ and an arbitrary target point $\fx\in\bM$, thus defining initial and final planar endpoints and  tangent path directions. The classical Euler-Mumford elastica geodesic model~\citep{chen2017global,mirebeau2018fast} aims to track a minimal geodesic path  $\cG_{\fs,\fx}:[0,1]\to\bM$ which links from the source point $\fs$ to the target point $\fx$ (i.e.\ $\cG_{\fs,\fx}(0)=\fs$, $\cG_{\fs,\fx}(1)=\fx$) and minimizes the energy $\cL$ formulated in~\cref{eq_OLElasticaEnergy}. For that purpose, we first consider the set $\Lip([0,1],\bM)$ of all Lipschitz continuous curves $\OLP:[0,1]\to\bM$ as the search space of geodesic paths. The minimization of the energy $\cL$ yields a geodesic distance map $\cD_\fs:\bM\to[0,\infty)$, defined as
\begin{equation}
\label{eq_GDM}	
\cD_\fs(\fx):=\inf_{\OLP\in\Lip([0,1],\bM)}\,\cL(\OLP),\quad s.t.~
\begin{cases}
\OLP(0)=\fs,&\\
\OLP(1)=\fx.&
\end{cases}
\end{equation}
The geodesic distance map $\cD_\fs$ is a potentially discontinuous viscosity solution to the first-order static HJB PDE~\citep{Bardi:2008Viscosity}
\begin{equation}
\label{eq_HJB}
\begin{cases}
\displaystyle\cH(\fx,d\cD_\fs(\fx))=\frac{1}{2}\psi(\fx)^2,~\forall \fx\in\bM\backslash\{\fs\},\\
\cD_\fs(\fs)=0,	
\end{cases}
\end{equation}
where $d\cD_\fs$ denotes the differential of the geodesic distance map $\cD_{\fs}$. Outflow boundary conditions are applied on $\partial \bM$.

A \emph{reversely parametrized} minimal geodesic path $\tilde \cG$, starting from a given target point $\fx$ and going back to the source point $\fs$, can be obtained by solving the following gradient descent ordinary differential equation (ODE), known as the \emph{geodesic backtracking} ODE
\begin{equation}
\label{eq_ODE}
\tilde\cG^\prime(u)= - \partial_{2}\cH(\tilde\cG(u),d\cD_{\fs}(\tilde\cG(u))),\quad 	
\end{equation}
with initial condition $\tilde\cG(0)=\fx$, and where $\partial_2\cH:=\partial\cH(\fx,\hfx)/\partial\hfx$. 
By construction, this path is well-defined up to the minimal arrival time $T:=\cD_\fs(\fx)$, and $\tilde \cG(T) = \fs$ is the source point.
The reparametrization $\cG_{\fs,\fx}(u) := \tilde \cG(T(1-u))$ yields a minimal geodesic path $\cG_{\fs,\fx} : [0,1] \to \bM$ such that $\cG_{\fs,\fx}(0)=\fs$ and $\cG_{\fs,\fx}(1)=\fx$.

\section{Curvature Prior Elastica Model: A Variant of the Classical Euler-Mumford Elastica Model}
In this section, we present a variant of the Euler-Mumford elastica geodesic model, favoring minimal geodesic paths whose curvature remains close to a prior reference value, defined pointwise. For comparison, the classical Euler-Mumford elastica model~\citep{chen2017global}, which favors geodesic paths of low curvature, corresponds to the special case where these prior values are defined as zero uniformly over the domain $\bM$.

\subsection{Formalism}
We begin by formulating a new curvature penalization term, which involves a scalar-valued function $\curPr:\bM\to\bR$. For a smooth curve $\gamma$ with curvature $\kappa$, we consider
\begin{equation}
\label{eq_PriorCurvature}
\tka(u)=\kappa(u)-\curPr\big(\gamma(u),\eta(u)\big),	
\end{equation}
where the angle $\eta(u)\in \bS^1$ reflects the tangent path direction for all $u\in[0,1]$, see~\cref{eq_TangentCharacteristic}. 
In this model, the function $\omega$ is referred to as the curvature prior map. 
For the purposes of curvilinear structure tracking, as considered in this paper the value $\omega(x,\theta)$ may be defined as the curvature of a reference curve passing nearby the position $x$, with a tangent orientation close to $\theta$, and obtained using an independent skeletonization procedure, see \cref{eq_PriorConstruction}.

We consider the variant of the Euler-Mumford elastic energy involving the curvature prior map $\omega$ and defined as:
\begin{equation}
\label{eq_PriorOriginalEnergy}
\int_0^1 \psi(\gamma(u),\eta(u))\left(1+(\beta\tka(u))^2\right)\|\gamma^\prime(u)\|du.
\end{equation}
For an orientation-lifted curve $\OLP=(\gamma,\eta)$ satisfying~\cref{eq_TangentCharacteristic}, the corresponding quantity is defined as 
\begin{equation}
\label{eq_VariantLiftedEnergy}
\kL(\OLP):=\int_0^1\psi(\OLP)\left(1+\beta^2\left(\frac{\eta^\prime}{\|\gamma^\prime\|}-\curPr(\OLP)\right)^2\right)\|\gamma^\prime\|du.
\end{equation} 
The energy $\kL$ leads to a new metric $\kF:\bM\times\bE\to[0,\infty]$~\citep{mirebeau2019hamiltonian}
\begin{equation}
\label{eq_PriorMetric}
\kF(\fx,\dfx):=
\begin{cases}
\|\dx\|+\frac{\beta^2(\dtheta-\curPr(\fx)\|\dx\|)^2}{\|\dx\|}, &\text{if~}\dx=\dn(\theta)\|\dx\|,\\
\infty,&\text{otherwise},
\end{cases}
\end{equation}
where $\fx=(x,\theta)\in\bM$, and where $\dfx=(\dx,\dtheta)\in\bE$, with the convention $\kF(\fx,\mathbf{0}) = 0$. 
The energy $\kL(\OLP)$ can therefore be reformulated as 
\begin{equation*}
\kL(\OLP)=\int_0^1\psi(\OLP(u))\kF(\OLP(u),\OLP^\prime(u))du.
\end{equation*}
Path globally minimizing this energy can be obtained, similarly to the classical case and following the general principles of optimal control, by applying the geodesic backtracking ODE to the viscosity solution of a HJB PDE, both involving a suitably modified Hamiltonian described in the next section. 

\begin{figure*}[t]
\centering
\includegraphics[height=0.28\linewidth]{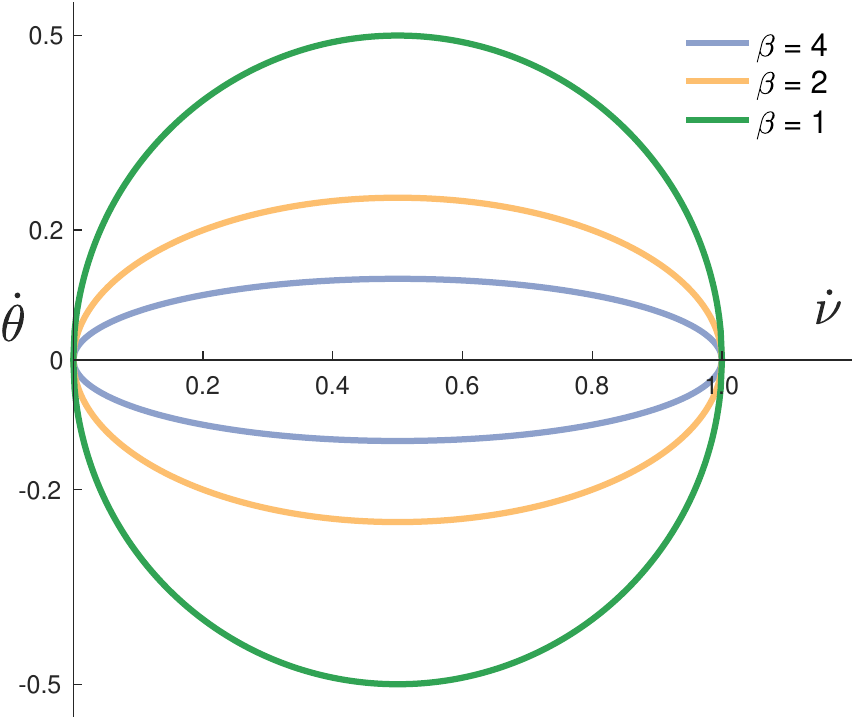}\quad\includegraphics[height=0.28\linewidth]{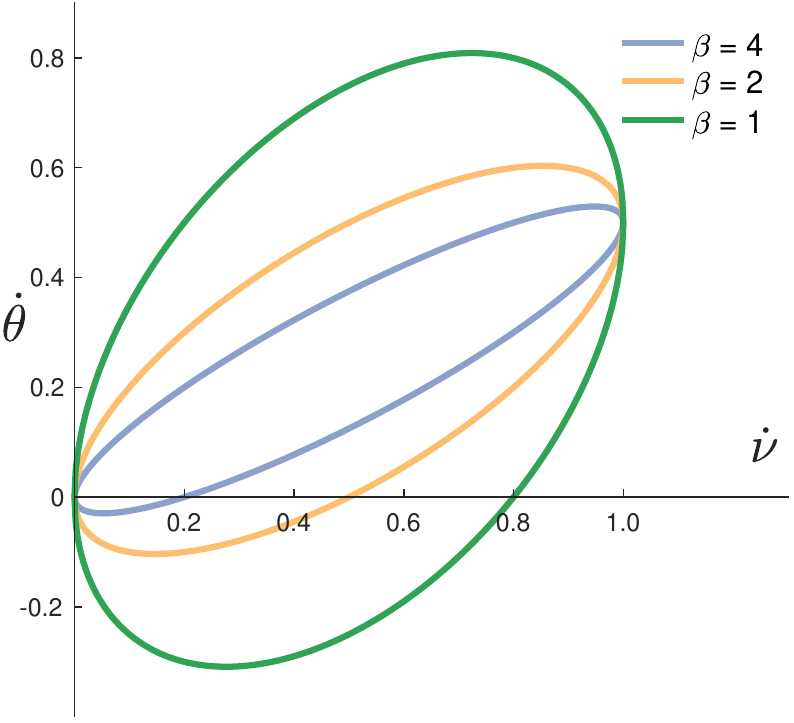}\quad\includegraphics[height=0.28\linewidth]{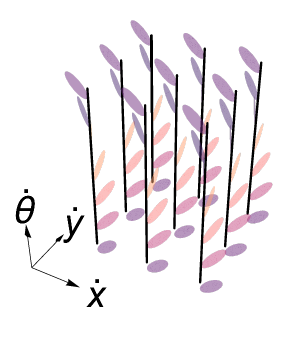}
\caption{Illustration of the control sets associated with the classical elastica model and the curvature prior elastica model. \textbf{Left}: The lines of different colors denote the sets of all $(\dot\nu,\dot\theta)$ such that $\cF(\fx,(\dot\nu\dn(\theta),\dot\theta))=1$ w.r.t. curvature penalization parameter $\beta\in\{1,2,4\}$, where $\dot\nu:=\|\dx\|$. Each of these sets is the boundary of the intersection between the control set $\cB(\fx)$ of  the elastica metric $\cF$ and the slice defined by $\dot x = \dn(\theta) \|\dot x\|$. \textbf{Center}: Likewise for the curvature prior  elastica model with $\omega(\fx) = 1/2$. \textbf{Right}: Control sets for the curvature prior elastica model, embedded in the tangent space $\bE = \bR^2 \times \bR$.}
\label{fig_ControlSet} 
\end{figure*}

\subsection{Transformation of the Hamiltonian and Control Set}
In order to describe our variant of the classical elastica model with  curvature prior, we briefly adopt a more general point of view. A (sub-Finslerian) metric on the domain $\bM$ is a mapping $\cF:\bM \times \bE \to [0,\infty]$, obeying $\cF(\fx,\lambda \dfx) = \lambda\cF(\fx,\dfx)$ for any $\fx\in \bM$, $\dfx \in \bE$, $\lambda > 0$, satisfying $\cF(\fx,\mathbf{0})=0$, and such that the \emph{control sets} $\cB(\fx)$ defined by~\cref{eqdef:control_set} are non-empty, convex, compact, and depend continuously on the base point $\fx \in \bM$ w.r.t.\ the Hausdorff distance~\citep{chen2017global}. Some control sets obeying these properties are illustrated in the right column of Fig.~\ref{fig_ControlSet}, where the corresponding~\emph{Hamiltonian} $\cH$ is defined by Eq.~\eqref{eqdef:hamiltonian}.

\begin{proposition}
\label{prop:metric_linear_composition}
Let $\cF$ be a metric on the domain $\bM$, let $\cL_\fx : \bE \to \bE$ be a linear invertible map depending continuously on the parameter $\fx\in \bM$, and let $\kF(\fx,\dfx) := \cF(\fx,\cL_\fx \dfx)$ for all $\dfx \in \bE$. Then $\kF$ is a metric on $\bM$, and the corresponding control sets $\kB$
and Hamiltonian $\kH$ are obtained from those  of the metric $\cF$ denoted $\cB$ and $\cH$ as follows
\begin{align*}
	\kB(\fx) &= \cL_\fx^{-1}\cB(\fx), &
	\kH(\fx,\hfx) &= \cH(\fx,\cL_\fx^{-\top} \hfx).
\end{align*}
\end{proposition}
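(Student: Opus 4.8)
The plan is to verify directly the three claims -- that $\kF$ is a metric, and that the stated formulas for $\kB$ and $\kH$ hold -- by tracking how each defining property transforms under the linear change of variables $\dfx \mapsto \cL_\fx \dfx$. First I would check that $\kF$ satisfies the axioms of a metric listed just before the proposition. Positive $1$-homogeneity in $\dfx$ is immediate since $\cL_\fx$ is linear: $\kF(\fx,\lambda\dfx) = \cF(\fx,\cL_\fx(\lambda\dfx)) = \cF(\fx,\lambda\,\cL_\fx\dfx) = \lambda\,\cF(\fx,\cL_\fx\dfx) = \lambda\kF(\fx,\dfx)$ for $\lambda>0$, and $\kF(\fx,\mathbf 0) = \cF(\fx,\mathbf 0) = 0$ since $\cL_\fx\mathbf 0 = \mathbf 0$. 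The remaining metric conditions concern the control sets, so they follow from the control-set identity, which I would establish next.

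For the control sets, I would simply unwind the definition~\eqref{eqdef:control_set}: $\kB(\fx) = \{\dfx \in \bE \mid \kF(\fx,\dfx)\le 1\} = \{\dfx \in \bE \mid \cF(\fx,\cL_\fx\dfx)\le 1\}$. Substituting $\dfy = \cL_\fx\dfx$, equivalently $\dfx = \cL_\fx^{-1}\dfy$ (legitimate because $\cL_\fx$ is invertible), this set is exactly $\{\cL_\fx^{-1}\dfy \mid \dfy\in\bE,\ \cF(\fx,\dfy)\le 1\} = \cL_\fx^{-1}\cB(\fx)$, which is the first formula. From this I would read off the metric axioms for $\kF$: $\cB(\fx)$ is nonempty, convex, and compact, and the image of such a set under the linear isomorphism $\cL_\fx^{-1}$ retains all three properties; continuity of $\fx\mapsto\kB(\fx)$ in the Hausdorff distance follows from the joint continuity of $\fx\mapsto\cB(\fx)$ and $\fx\mapsto\cL_\fx^{-1}$ (the latter being continuous since matrix inversion is continuous on invertible matrices), using that a continuous linear map is Lipschitz on the relevant bounded region so it transports Hausdorff-close sets to Hausdorff-close sets.

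For the Hamiltonian, I would plug the control-set formula into~\eqref{eqdef:hamiltonian} and use the defining property of the transpose. We have
\begin{equation*}
\kH(\fx,\hfx) = \frac12\Big(\max_{\dfx\in\kB(\fx)}\<\hfx,\dfx\>\Big)^2 = \frac12\Big(\max_{\dfy\in\cB(\fx)}\<\hfx,\cL_\fx^{-1}\dfy\>\Big)^2.
\end{equation*}
Since $\<\hfx,\cL_\fx^{-1}\dfy\> = \<(\cL_\fx^{-1})^{\top}\hfx,\dfy\> = \<\cL_\fx^{-\top}\hfx,\dfy\>$, the maximum equals $\max_{\dfy\in\cB(\fx)}\<\cL_\fx^{-\top}\hfx,\dfy\>$, and squaring and halving gives exactly $\cH(\fx,\cL_\fx^{-\top}\hfx)$, as claimed. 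None of these steps presents a genuine obstacle; the only point requiring mild care is the Hausdorff-continuity claim for $\kB$, where one must combine continuity of $\cB(\fx)$, continuity of $\fx\mapsto\cL_\fx^{-1}$, and a local uniform Lipschitz bound to conclude joint continuity of the composition -- this is the one place where a brief quantitative estimate is needed rather than pure algebraic manipulation.
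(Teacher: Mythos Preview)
Your proposal is correct and follows essentially the same approach as the paper's proof: both establish the control-set identity by the change of variables $\dfx \mapsto \cL_\fx\dfx$, deduce the metric axioms from linearity and continuity of $\cL_\fx^{-1}$, and obtain the Hamiltonian formula by moving $\cL_\fx^{-1}$ across the pairing via the transpose. The only differences are cosmetic (you check homogeneity first rather than last, and you spell out the Hausdorff-continuity argument in slightly more detail than the paper does).
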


\begin{proof}
	The expression of the transformed control set $\kB(\fx)\subset \bE$ follows from the series of equivalences: 
\begin{equation*}
\dfx \in \kB(\fx) \Leftrightarrow \cF(\fx,\cL_\fx \dfx) \leq 1 \Leftrightarrow \cL_\fx \dfx \in \cB(\fx) \Leftrightarrow \dfx \in \cL_\fx^{-1} \cB(\fx).	
\end{equation*}	
By linearity of $\cL_\fx^{-1}$, the set $\kB(\fx)$ is like $\cB(\fx)$ non-empty, convex, and compact. In addition, the set $\kB(\fx)$ depends continuously on the point $\fx\in \bM$, by continuity of $\cB(\fx)$ and $\cL_\fx^{-1}$. Noting that $\kF(\fx,\lambda \dfx) = \cF(\fx,\lambda \cL_\fx \dfx) = \lambda \cF(\fx,\cL_\fx \dfx) = \lambda \kF(\fx,\dfx)$ and that $\kF(\fx,\mathbf{0}) = \cF(\fx,\mathbf{0}) = 0$, by linearity of $\cL_\fx$, we obtain as announced that $\kF$ is a metric. Finally, the announced expression of the Hamiltonian $\kH$ follows from the computation 
\begin{equation*}
\max_{\dfx \in \kB(\fx)} \<\hfx,\dfx\>
= \max_{\dfx \in \cB(\fx)} \left\langle\hfx,\cL_\fx^{-1} \dfx\right\rangle
= \max_{\dfx \in \cB(\fx)} \left\langle\cL_\fx^{-\top}\hfx, \dfx\right\rangle,
\end{equation*}
which concludes the proof.	
\qed
\end{proof}

\begin{proposition}	
\label{prop_ExplicitHamiltonianPrior}
The considered variant of the classical elastica model, equipped with a continuous curvature prior map $\omega : \bM \to \bR$, defines a metric with explicit expression in~\cref{eq_PriorMetric}, with control sets 	
\begin{align}
\nonumber
\kB(\fx):= \Big\{& \dfx = (\dx,\dtheta) \in \bE \mid \dx = \dn(\theta)\|\dx\|,\\
\label{eq:elastica_prior_control_set}
& \left(\|\dx\|-\tfrac{1}{2}\right)^2 + \beta^2 \left(\dtheta - \omega(\fx)\|\dx\|\right)^2\leq\tfrac{1}{4}\Big\}, 
\end{align}
for all $\fx = (x,\theta) \in \bM$, and whose Hamiltonian reads 
\begin{align*}
\kH&(\fx,\hfx)=\\
&\frac{1}{8}\left(\<\hfx,\dfn(\theta,\curPr(\fx))\>+\sqrt{\<\hfx,\dfn(\theta,\curPr(\fx))\>^2+(\htheta/\beta)^2)}\right)^2,
\end{align*}
for all co-vectors $\hfx = (\hfx,\htheta) \in \bE^*$. Recall that $\dfn(\theta,a) := ((\cos\theta,\sin \theta),a)\in \bE := \bR^2 \times \bR$. This Hamiltonian can also be written as: 
\begin{align}
\label{eq_ExplicitHamiltonianPrior2}
&\kH(\fx,\hfx)=\frac{3}{8}\int^{\pi/2}_{-\pi/2}\<\hfx,\dkq(\fx,\varphi)\>_+^2\cos\varphi d\varphi,
\end{align}
where  
$\dkq(\fx,\varphi)\in\bE$ is defined for any $\vp \in [-\pi/2,\pi/2]$ as 
\begin{equation*}
\dkq(\fx,\varphi):=\left(\dn(\theta)\cos\varphi,\curPr(\fx)\cos\varphi+\beta^{-1}\sin\varphi\right).
\end{equation*}
\end{proposition}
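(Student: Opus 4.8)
The plan is to realize the curvature prior elastica metric $\kF$ of \cref{eq_PriorMetric} as the composition of the classical elastica metric $\cF$ of \cref{eq_ElasticaMetric} with an $\fx$-dependent linear change of tangent coordinates, and then to read off all three conclusions from \cref{prop:metric_linear_composition} together with the already-known explicit formulas \cref{eq:elastica_control_set}, \cref{eq_ElasticaHam1} and \cref{eq_ElasticaHam2}. Concretely, for $\fx = (x,\theta)\in\bM$ I would introduce the map $\cL_\fx : \bE \to \bE$, $\cL_\fx(\dx,\dtheta) := \big(\dx,\ \dtheta - \omega(\fx)\<\dx,\dn(\theta)\>\big)$. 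For fixed $\fx$ this is linear in $(\dx,\dtheta)$, it is invertible with $\cL_\fx^{-1}(\dx,\dtheta) = \big(\dx,\ \dtheta + \omega(\fx)\<\dx,\dn(\theta)\>\big)$, and it depends continuously on $\fx$ because $\omega$ and $\theta\mapsto\dn(\theta)$ are continuous, so $\cL_\fx$ satisfies the hypotheses of \cref{prop:metric_linear_composition}.

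The first computation is the identity $\kF(\fx,\dfx) = \cF(\fx,\cL_\fx\dfx)$ for every $\dfx\in\bE$. Since $\cL_\fx$ fixes the $\dx$-component, the nonlinear constraint $\dx = \dn(\theta)\|\dx\|$ of \cref{eq_TangentCharacteristic} is preserved, so both sides equal $+\infty$ exactly when it fails; and when it holds one has $\<\dx,\dn(\theta)\> = \|\dx\|$, so the $\dtheta$-component of $\cL_\fx\dfx$ is $\dtheta - \omega(\fx)\|\dx\|$ and the bending term of $\cF(\fx,\cL_\fx\dfx)$ is precisely $\beta^2(\dtheta - \omega(\fx)\|\dx\|)^2/\|\dx\|$, matching \cref{eq_PriorMetric} (the degenerate case $\dx=\mathbf{0}$ being covered by the convention $\kF(\fx,\mathbf{0})=0=\cF(\fx,\mathbf{0})$). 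Invoking \cref{prop:metric_linear_composition} then yields at once that $\kF$ is a metric, $\kB(\fx) = \cL_\fx^{-1}\cB(\fx)$ and $\kH(\fx,\hfx) = \cH(\fx,\cL_\fx^{-\top}\hfx)$.

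It remains to substitute the explicit descriptions. For the control set, applying $\cL_\fx^{-1}$ to \cref{eq:elastica_control_set} and using once more $\<\dx,\dn(\theta)\>=\|\dx\|$ on the slice replaces $\dtheta$ by $\dtheta-\omega(\fx)\|\dx\|$ in the quadratic inequality, which is exactly \cref{eq:elastica_prior_control_set}. For the Hamiltonian I would compute $\cL_\fx^{-\top}$, either as a $3\times 3$ matrix or via the adjoint identity $\<\cL_\fx^{-\top}\hfx,\dfx\> = \<\hfx,\cL_\fx^{-1}\dfx\>$, obtaining $\cL_\fx^{-\top}(\hx,\htheta) = \big(\hx + \omega(\fx)\htheta\,\dn(\theta),\ \htheta\big)$; then $\<\cL_\fx^{-\top}\hfx,\dfn(\theta,0)\> = \<\hx,\dn(\theta)\> + \omega(\fx)\htheta = \<\hfx,\dfn(\theta,\omega(\fx))\>$, while the last coordinate $\htheta$ is untouched, so substituting into \cref{eq_ElasticaHam1} gives the announced closed form of $\kH$. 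Likewise $\cL_\fx^{-1}\big(\dn(\theta)\cos\vp,\ \beta^{-1}\sin\vp\big) = \big(\dn(\theta)\cos\vp,\ \omega(\fx)\cos\vp+\beta^{-1}\sin\vp\big) = \dkq(\fx,\vp)$, so substituting into \cref{eq_ElasticaHam2} gives the integral form \cref{eq_ExplicitHamiltonianPrior2}.

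The only genuinely delicate point I anticipate is the interaction between the linear map and the nonlinear admissibility constraint: one must resist shifting $\dtheta$ by the non-linear quantity $\omega(\fx)\|\dx\|$ and instead use the linear quantity $\omega(\fx)\<\dx,\dn(\theta)\>$, then verify that this choice nonetheless reproduces $\kF$ exactly on the constraint surface (where the two quantities agree) and off it (where both metrics are $+\infty$), including the degenerate velocities. Once $\cL_\fx$ is pinned down correctly, the remainder is bookkeeping on top of \cref{prop:metric_linear_composition} and the established elastica formulas.
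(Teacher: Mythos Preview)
Your proposal is correct and follows essentially the same approach as the paper: introduce the linear map $\cL_\fx(\dx,\dtheta)=(\dx,\dtheta-\omega(\fx)\<\dn(\theta),\dx\>)$, verify that $\kF(\fx,\dfx)=\cF(\fx,\cL_\fx\dfx)$, apply \cref{prop:metric_linear_composition}, and substitute into \cref{eq:elastica_control_set}, \cref{eq_ElasticaHam1}, \cref{eq_ElasticaHam2}. Your explicit formula $\cL_\fx^{-\top}(\hx,\htheta)=(\hx+\omega(\fx)\htheta\,\dn(\theta),\htheta)$ is in fact stated more carefully than the paper's displayed version, and your discussion of why the linear shift $\omega(\fx)\<\dx,\dn(\theta)\>$ (rather than the nonlinear $\omega(\fx)\|\dx\|$) is the right choice is a helpful clarification.
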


\begin{proof}
Let us fix $\fx = (x,\theta) \in \bM$ in the following. Consider the linear map $\cL_\fx : \bE \to \bE$ defined for all $\dfx = (\dx,\dtheta) \in \bE$ as 
\begin{equation*}
	\cL_\fx(\dfx) := (\dx,\, \dot \theta - \omega(\fx) \<\dn(\theta),\dx\>).
\end{equation*}
For clarity, let us also provide the matrix of this operator in the canonical basis of the three-dimensional space $\bE = \bR^2 \times \bR$
\begin{equation}
	\label{eq:Lx_matrix}
	[\cL_\fx] = 
	\begin{pmatrix}
		1 & 0 & -\omega(\fx)\cos \theta\\
		0 & 1 & -\omega(\fx)\sin \theta	\\
		0 & 0 & 1
	\end{pmatrix}.
\end{equation}
Then $\cL_\fx$ depends continuously on the point $\fx$, and is invertible in view of the triangular form of its matrix. Since $\kF(\fx,\dfx) = \cF(\fx,\cL_\fx \dfx)$, indeed compare~\cref{eq_PriorMetric} with~\cref{eq_ElasticaMetric},  the variant of the elastica model equipped with the curvature prior $\omega$ defines a valid metric whose control sets and Hamiltonian can be expressed in terms of those of the classical Euler-Mumford elastica model, see~\cref{prop:metric_linear_composition}. More precisely, noting  the equivalence $\dfx \in \kB(\fx) \Leftrightarrow \cL_\fx \dfx \in \cB(\fx)$ and using the characterization of $\cB(\fx)$ established in~\cref{eq:elastica_control_set}, we obtain~\cref{eq:elastica_prior_control_set}.
	From \cref{eq:Lx_matrix},  we can easily compute the inverse and the transposed inverse mappings, namely
\begin{align*}
\cL_\fx^{-1}(\dfx) &= (\dx,\, \dtheta + \omega(\fx) \<\dn(\theta),\dx\>), & 
\forall\dfx &= (\dx,\dtheta) \in \bE,\\
\cL_\fx^{-\top}(\hfx) &= (\hx + \omega(\fx) \dn(\theta),\, \htheta), &
\forall\hfx &= (\hx,\htheta) \in \bE^*.
\end{align*}
Note that $$\<\cL_\fx^{-\top}\hfx, \dfn(\theta,0)\> = \<\hfx, \cL_\fx^{-1}\dfn(\theta,0)\> = \<\hfx,\dfn(\theta,\omega(\fx))\>$$ and likewise that $$\<\cL_\fx^{-\top}\hfx,(\dn(\theta)\cos\varphi,\beta^{-1}\sin\varphi)\> = \<\hfx,\dkq(\fx,\varphi)\>.$$ 
Evaluating~\cref{eq_ElasticaHam1} and~\cref{eq_ElasticaHam2} at $\cL_\fx^{-\top}(\hfx)$, and using the previous identities, we obtain the announced expressions of the Hamiltonian $\kH$, which concludes the proof. 
\end{proof}

The integral form of the Hamiltonian $\kH$, formulated in~\cref{eq_ExplicitHamiltonianPrior2} can be approximated using the $L$-point Fejer quadrature rule for any positive integer $L$, following~\citep{mirebeau2018fast}. This yields
\begin{equation}
\label{eq_Approximation}
\kH(\fx,\hfx)=\sum_{l=1}^L \mu_l\<\hfx,\, \dkq(\fx,\varphi_l)\>_+^2 + \|\hfx\|^2 \cO( L^{-2}),
\end{equation}
where we denoted by $\mu_l>0$ the weights of the Fejer quadrature rule at the angles $\varphi_l =  (2 l-1-L)\pi/(2L)\in[-\pi/2,\pi/2]$, for any $1\leq l \leq L$. This approximation plays a crucial role in the HFM method, as presented in next section.

\begin{figure*}[t]
\setlength{\fboxsep}{0pt}
\centering
\subfigure[]{\includegraphics[width=0.235\linewidth]{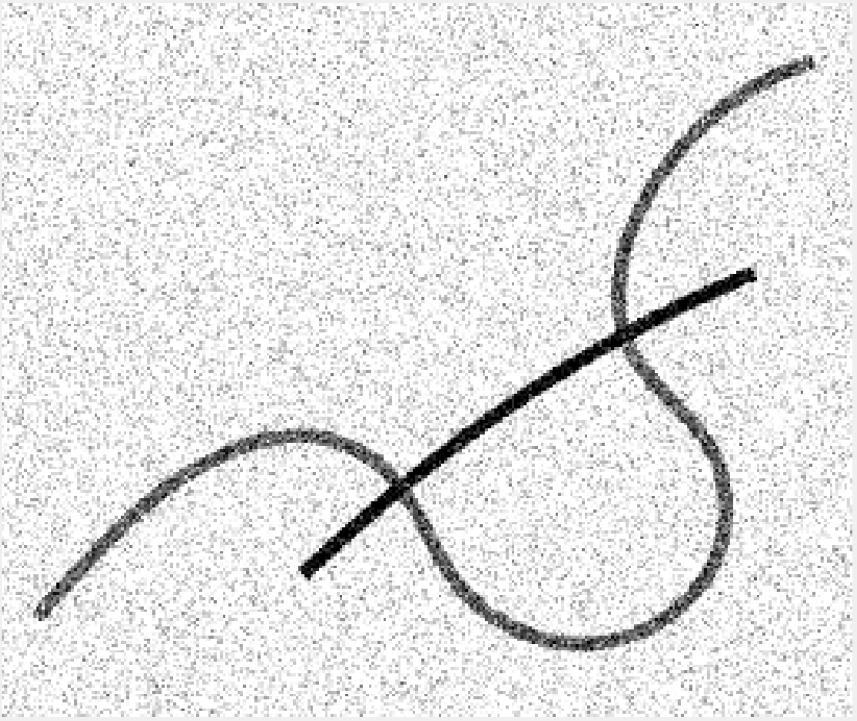}}~\subfigure[]{\includegraphics[width=0.235\linewidth]{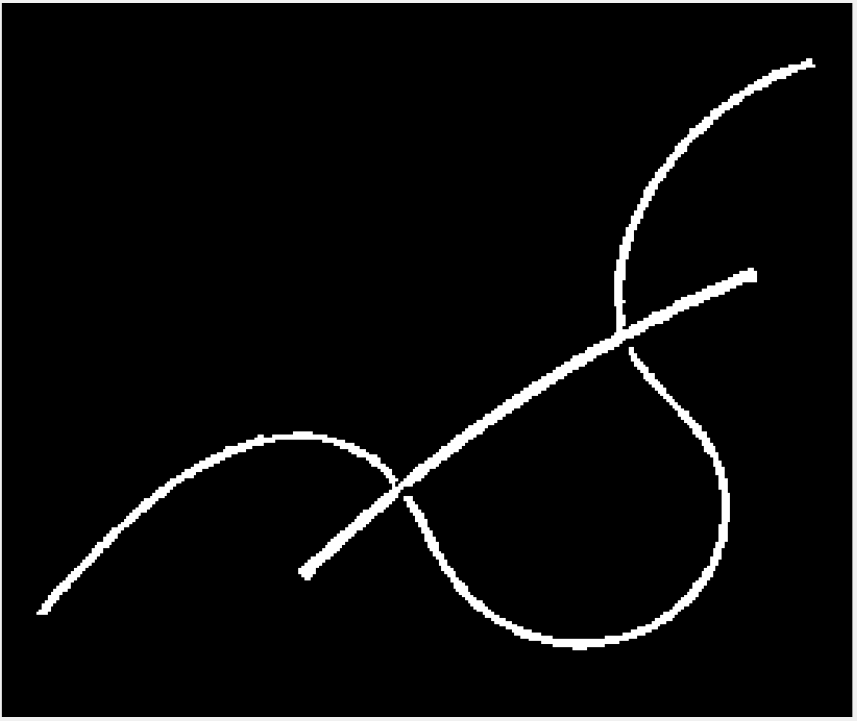}}~\subfigure[]{\fbox{\includegraphics[width=0.235\linewidth]{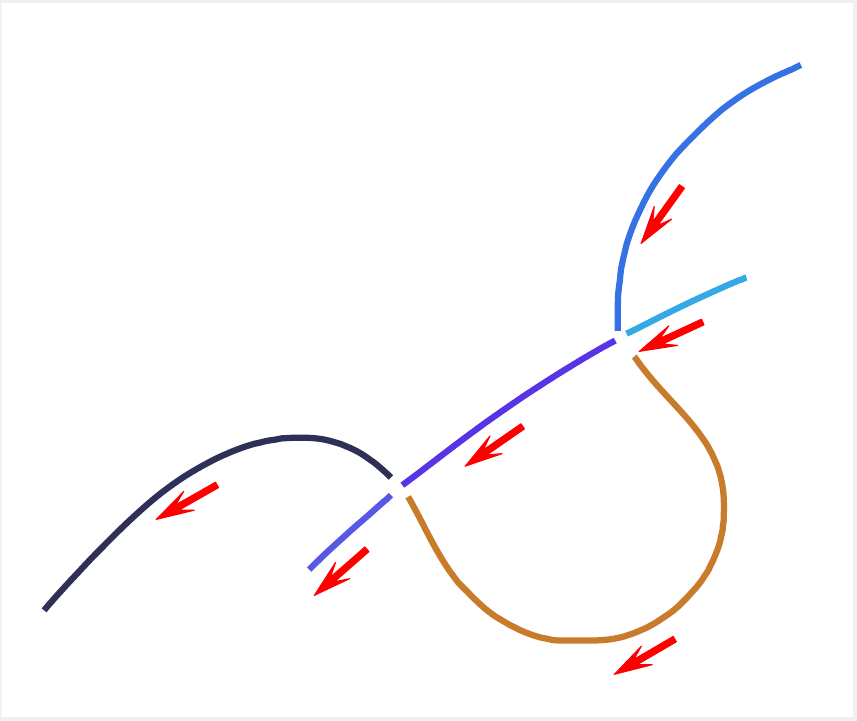}}}~\subfigure[]{\fbox{\includegraphics[width=0.235\linewidth]{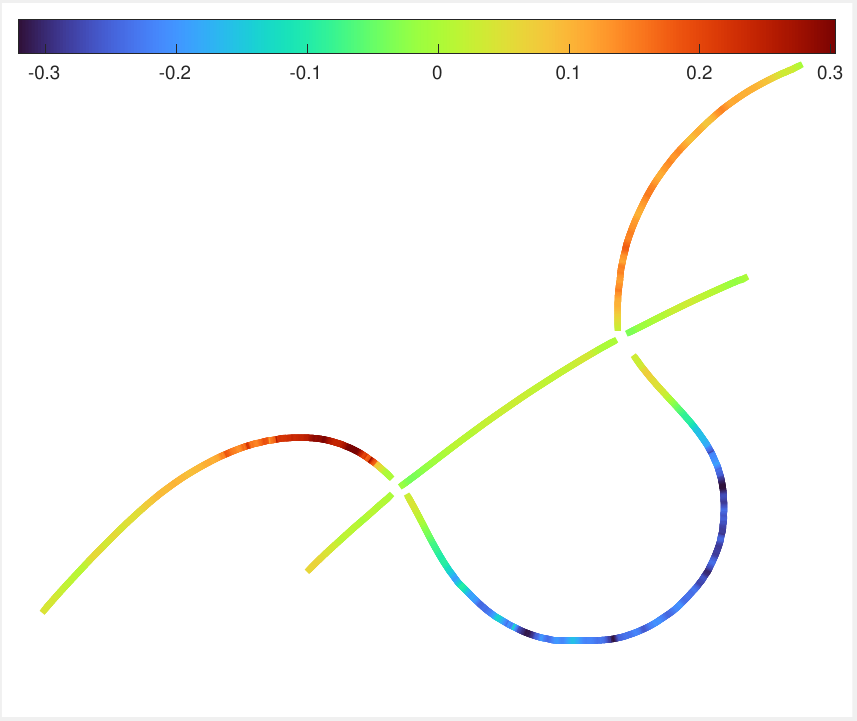}}}
\caption{An example for the construction of candidate centerlines. \textbf{a} The original image. 
\textbf{b} Visualizing the binary segmentation map $\Xi$. 
\textbf{c} The candidate centerlines $\rho_j$ indicated by different colors. The arrows illustrate the directions of the corresponding candidate centerlines. 
\textbf{d} Visualizing the curvature of the candidate centerlines.
}
\label{fig_step}
\end{figure*}

\section{Application to Tracking Curvilinear Structures}
\label{sec_App}
Geodesic paths are considered as a powerful tool for interactive curvilinear structure tracking~\citep{cohen1997global,li2007vessels,pechaud2009extraction,liao2018progressive,chen2019minimal}.  Following the spirit of this line of works, we propose a fast and reliable method for computing a curvature prior map $\omega$ from image data, allowing the curvature prior  elastica model to accurately extract curvilinear structures of interest. This computation method,  presented in the remainder of this section, consists of two main steps: (I) the generation of a set of disjoint curvilinear structure centerlines, and (II) the estimation of the tangent direction and curvature associated with those disjoint centerlines.

\subsection{Generation of the Disjoint Centerlines} 
Centerlines serve as an effective descriptor for curvilinear structures, by which geometric features such as their tangent direction and curvature properties can be estimated. A broad variety of approaches have been introduced to handle the task of extracting centerlines from curvilinear structure networks which usually include complex junctions, e.g.~\citep{kaul2012detecting,cetin2012vessel,turetken2016reconstructing}. Among those methods, the skeletonization schemes, adopted in this work,  provide a powerful way for producing centerlines as the skeletons of the pre-segmented curvilinear structures~\citep{kimmel1995skeletonization,siddiqi2002hamilton,lam1992thinning}. 

Let $\Xi:\Omega\to\{0,1\}$ represent the binary segmentation of curvilinear structures contained in an image, where $\Xi(x)=1$ implies that the position $x$ belongs to a curvilinear structure, while $\Xi(x)=0$ indicates a background position $x$. We compute the skeleton or medial axis of the elongated shapes $\{x\in\Omega\mid\Xi(x)=1\}$, and remove all the junction points of these skeleton structures, yielding $J$ disjoint skeleton segments, where $J$ is a positive integer. After applying a curve smoothing procedure to regularize these skeleton segments, we obtain a family of smooth curves $\rho_j:[0,1]\to\Omega$, indexed by $1\leq j\leq J$, as the parametrization of the regularized skeleton segments. In what follows we refer to these curves $\rho_j$ as \emph{candidate centerlines}. 

In practice, we rely on morphological filters~\citep{lam1992thinning}, which benefit from a low computational complexity, to generate skeletons from pre-segmentation data.
Fig.~\ref{fig_step} illustrates an example for constructing the candidate centerlines $\{\rho_j\}_{1\leq j\leq J}$ using a synthetic image. Fig.~\ref{fig_step}b visualizes the corresponding binary segmentation map $\Xi$. In Fig.~\ref{fig_step}c, the candidate centerlines are illustrated using different colors, and arrows indicate their directions. The curvature of these candidate centerlines is depicted in Fig.~\ref{fig_step}d.

Note that a curvilinear structure of interest usually corresponds to the concatenation of several candidate centerlines, together with the crucial curves  connecting pairs of adjacent centerlines. For this reason, the skeletonization process is not an appropriate solution to the curvilinear structure tracking problem considered, but rather an efficient descriptor for extracting the geometric 
features of parts of the curvilinear structures.

\subsection{Construction of the Curvature Prior Map $\omega$}
We propose to estimate the curvature prior map $\omega$ using the computed candidate centerlines $\rho_j$, whose curvature and unit normal are respectively denoted by $\kappa_j:[0,1]\to]-\infty,\infty[$ and $\cN_j:[0,1]\to\bR^2$, for $1\leq j \leq J$.

Let $0<U<\min_j\{1/\|\kappa_j\|_\infty\}$ be a bounded constant. Consider the mapping $\Psi_j:[0,1]\times\,[-U,U]\,\to\bR^2$, depending on the curve parameter and the deviation from $\rho_j$, defined as 
\begin{equation*}
\Psi_j(u,\lambda):=\rho_j(u)+\lambda\cN_j(u)
\end{equation*}
for any $u\in[0,1]$ and $\lambda\in[-U,U]$. The bounding constraint on the parameter $U$ implies that the mappings $\Psi_j$ are injective, and  each of them  parametrizes a tubular neighborhood $\kT_j\subset\Omega$ of width $U$ surrounding the candidate centerline $\rho_j$
\begin{equation*}
\kT_j:= \left\{x=\Psi_j(u,\lambda)\in\Omega\mid\forall u\in[0,1]\text{~and~}\forall\lambda\in[-U,U]\right\}.	
\end{equation*}
However, such a tubular neighborhood $\kT_i$ may intersect with a distinct one $\kT_j$, i.e.\ $\kT_i\cap\kT_j \neq \emptyset$, when their centerlines $\rho_i$ and $\rho_j$ are close to each other. In order to tackle this problem, we partition the domain $\Omega$ into a series of disjoint regions $\cV_j\subset\Omega$ for $1\leq j \leq J$
\begin{equation}
\label{eq_Voronoi}
\cV_j:=\left\{x\in\Omega~|~d_{\rho_j}(x)<d_{\rho_i}(x),~\forall i\neq j\right\},
\end{equation}
where $d_{\rho_j}(x)$ denotes the Euclidean distance between $x$ and the candidate centerline $\rho_j$. These regions are also known as the Voronoi cells. Accordingly, for each $\rho_j$ a modified tubular neighborhood $\cT_j\subset\Omega$ can be constructed  by
\begin{equation}
\label{eq_FinalTubeNeigh}
\cT_j:=\kT_j\cap\cV_j,	
\end{equation}
which satisfies $\cT_j\cap \cT_i=\emptyset$ for any $i\neq j$. In practice, the construction of the Voronoi cells $\cV_j$ can be efficiently implemented by propagating an unsigned Euclidean distance map simultaneously from all the candidate centerlines~\cite{saye2011voronoi,saye2012analysis}.

We consider two scalar-valued functions $\phi:\cup_j\cT_j\to\bR$ and $\vartheta:\cup_j\cT_j\to\bS^1$ using all the candidate centerlines $\rho_j$, $1 \leq j \leq J$, such that for any physical position $x=\Psi_j(u,\lambda)\in\cT_j$, one respectively has 
\begin{align*}
\phi(x):=\kappa_j(u),~\text{ and ~}\dn(\vartheta(x))\|\rho^\prime_j(u)\|=\rho^\prime_j(u).
\end{align*}
The function $\phi$ extends the curvature of the candidate centerlines $\rho_j$ to their respective neighborhood regions $\cT_j$ of width $U$. Furthermore, for a physical position $x\in\cT_j$, the angles $\vartheta(x)$ and $\vartheta(x)+\pi$ correspond to the two possible orientations that the elongated structure should have at the position $x$.
With these definitions, the curvature prior map $\omega$ can be generated in terms of the estimated functions $\phi$ and $\vartheta$. More precisely,  we define the curvature prior map $\omega$ for any point $\fx=(x,\theta)\in\bM$ as follows
\begin{equation}
\label{eq_PriorConstruction}
\omega(\fx):=
\begin{cases}
\phi(x)\sign\big(\langle\dn(\theta),\dn(\vartheta(x))\rangle\big),&\forall\fx\in\cup_j\cT_j\times\bS^1,\\
0,	&\text{otherwise},
\end{cases}
\end{equation} 
where $\sign(a)$ stands for the sign  of a scalar value $a\in\bR$.
In the first line of~\cref{eq_PriorConstruction},  for any  physical position $x\in\cup_j\cT_j$ we set $\omega(x,\theta)=\omega(x,\vartheta(x))=\phi(x)$ if  the angular position $\theta$ is closer to $\vartheta(x)$ in the sense of the Euclidean scalar product of the vectors $\dn(\theta)$  and $\dn(\vartheta(x))$, and  $\omega(x,\theta)=\omega(x,\vartheta(x)+\pi)=-\phi(x)$ if $\theta$ is closer to $\vartheta(x)+\pi$. 

\subsection{Computation of the Cost Function $\psi$} 
The multi-orientation data-driven cost function $\psi(x,\theta)$ should have low values when both (i) the physical position $x$ is close to the centerline of a curvilinear structure, and (ii) the direction $\dn(\theta)$ closely aligns with the tangent of this centerline at the physical position $x$. For that purpose and following~\citep{chen2017global,duits2018optimal,bekkers2015pde}, we build the cost function $\psi$ for any point $\fx = (x,\theta) \in\bM$ as follows
\begin{equation}
\label{eq_CostFunction}
\psi(\fx)=\exp\left(-\alpha\,\frac{g(\fx)}{\|g\|_\infty}\right),
\end{equation}
where $\alpha>0$ is a positive weighting parameter, and where $g:\bM\to [0,\infty[$  is an orientation score map usually taken as an efficient tool in image analysis~\citep{franken2009crossing,hummel1983foundations,bekkers2014multi,parent1989trace}. This map carries the curvilinear appearance and anisotropy features, and obeys $g(x,\theta)=g(x,\theta+\pi)$, $\forall(x,\theta)\in\bM$. In this work, the steerable optimally oriented flux filter~\citep{law2008three} is chosen as the curvilinear feature extractor in order to compute the orientation score $g$. We refer to~\citep{chen2017global} for more details on the computation of the orientation score map $g$ from image data involving curvilinear structures.

\begin{figure}[t]
\centering
\includegraphics[width=0.75\linewidth]{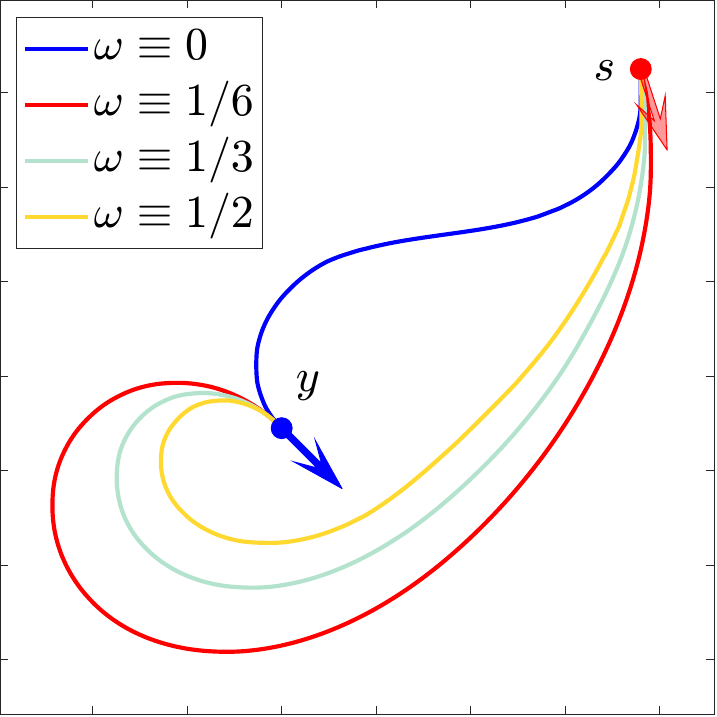}
\caption{Illustration of the physical projections of geodesic paths for the curvature prior elastica model with different values of $\omega$.  The red and blue dots with the associated arrows represent the source and target points of the geodesic paths, respectively. }
\label{fig_ConstVelocity}
\end{figure}

\subsection{Initialization and Geodesic Paths Tracking}
The curvilinear structure tracking method can be implemented in an interactive manner, requiring a source point $\fs=(s,\theta_s)\in\bM$ and a target point $\fy=(y,\theta_y)\in\bM$ as initialization. In our model, we assume that the physical positions $s$ and $y$ are provided by the user. Furthermore,  the corresponding angles $\theta_s$ and $\theta_y$ can be either given by the user in complex scenarios, or be estimated from the cost function $\psi$ as 
\begin{equation}
\theta_s = \underset{\theta\in[0,\pi[}{\arg\min}~\psi(s,\theta).
\end{equation}

In addition to the points $\fs$ and $\fy$ as described above, we generate another pair of source and target points $\tilde\fs=\big(s,\theta_s+\pi\big)$ and $\tilde\fy=\big(y,\theta_y+\pi\big)$. In other words, $\fs$ and $\tilde\fs$ (resp. $\fy$ and $\tilde\fy$) share the same physical position but different angular positions.  In this case, the goal is to find a minimal geodesic path from the source point set $\ks=\{\fs,\tilde\fs\}$ to the target point set $\ky=\{\fy,\tilde\fy\}$,  as considered in~\citep{chen2017global}. For that purpose, a geodesic distance map $\cD_\ks$ is estimated by solving the HJB PDE 
\begin{equation*}
\kH(\fx,d\cD_\ks(\fx))=\frac{1}{2}\psi(\fx)^2,
\end{equation*}
for all $\fx \in \bM \sm \ks$,
with the boundary condition $\cD_{\ks}(\fx)=0$ for any point $\fx\in\ks$, which is a straightforward generalization of the single source case described in \cref{eq_HJB_biased}. 
The geodesic of interest is backtracked, as described in~\cref{eq_ODE}, from the point of the target set $\ky$ whose geodesic distance value is smallest, and by the nature of the viscosity solution the backtracking procedure ends at the point of the seed set $\ks$ which is closest w.r.t.\ the metric. For computational efficiency, the procedure for estimating the respective geodesic distances can be terminated immediately once either point of the target set $\ky$ is reached by the fast marching front, and this optimization is valid since the fast marching front advances monotonically. Note that this geodesic path tracking procedure allows the user \emph{not} to distinguish between the source and target points, which is convenient in practice.

\begin{figure}[t]
\centering
\subfigure[]{\includegraphics[width=0.46\linewidth]{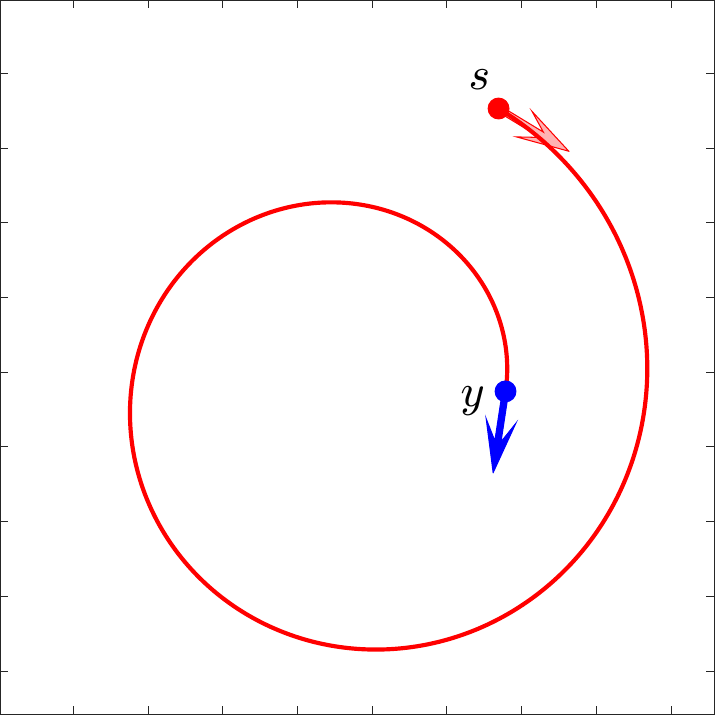}}~\subfigure[]{\includegraphics[width=0.46\linewidth]{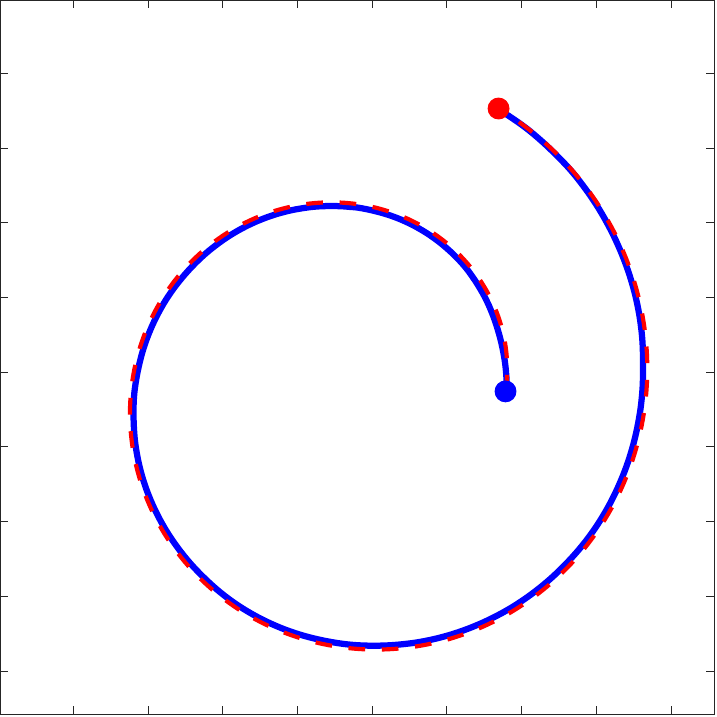}}
\caption{Illustration of the extraction of a smooth curve using the curvature prior elastica model.  \textbf{a} A smooth planar curve whose endpoints $s$ and $y$ are respectively denoted by the red and blue dots. The arrows represent the tangent vectors at the endpoints $s$ and $y$ of the planar curve. \textbf{b} The blue solid line denotes the physical projection of the  geodesic path from the curvature prior elastica model, and the red dashed line illustrates the given smooth curve.}
\label{fig_CurveFitting}
\end{figure}

\section{Experimental Results}
\label{sec_Exp}
In this section, we present numerical results obtained with the introduced curvature prior  elastica model. Our experiments are meant to illustrate the influence of the curvature prior term $\omega$, and for that purpose we perform qualitative and quantitative comparisons with the classical elastica model~\citep{chen2017global,mirebeau2018fast}, on curvilinear structure tracking tasks applied to both synthetic and real images.
In the remainder of this section, both geodesic models are configured using the identical setting, except for the curvature prior term which is set to $\omega\equiv0$ for the classical elastica model.

Fig.~\ref{fig_ConstVelocity} illustrates several geodesic paths from the curvature prior elastica geodesic model where the term $\omega$ is set as a \emph{constant function} with different values. The cost function is set as $\psi\equiv 1$.  The orientation-lifted source point $\fs = (s,\theta_s) \in \bM$ is shown as a red dot at the position $s$ equipped with an arrow pointing in the direction $\dn(\theta_s)$, and likewise for the target point $\fy=(y,\theta_y)$ in blue. 
In this figure, the planar physical projections of four distinct geodesic paths corresponding to different values of $\omega$ are illustrated. The blue line corresponds to the geodesic path from the classical elastica model (i.e.\ $\omega\equiv0$).  As the value of $\omega$ increases, we observe in this figure that the portions of these geodesic paths surrounding the target point feature turns with diminishing turning radius.  

\begin{figure}[t]
\centering
\includegraphics[width=0.31\linewidth]{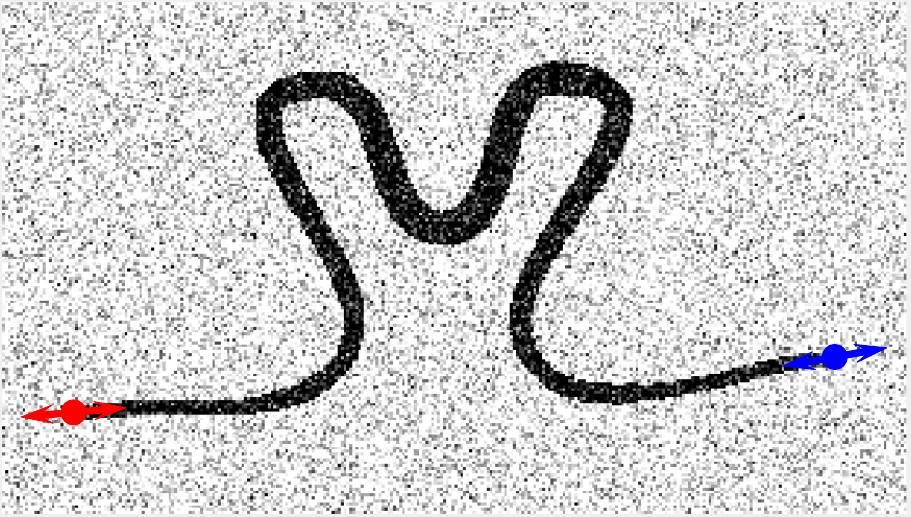}\,\includegraphics[width=0.31\linewidth]{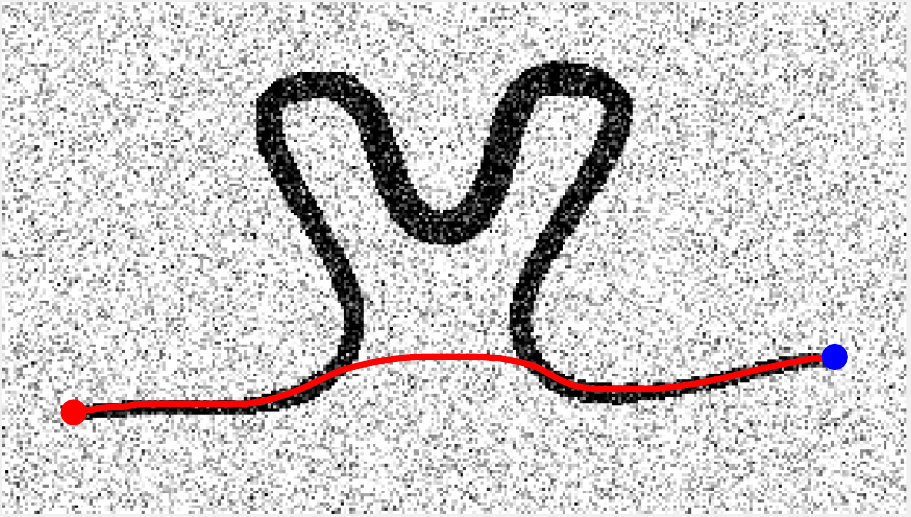}\,\includegraphics[width=0.31\linewidth]{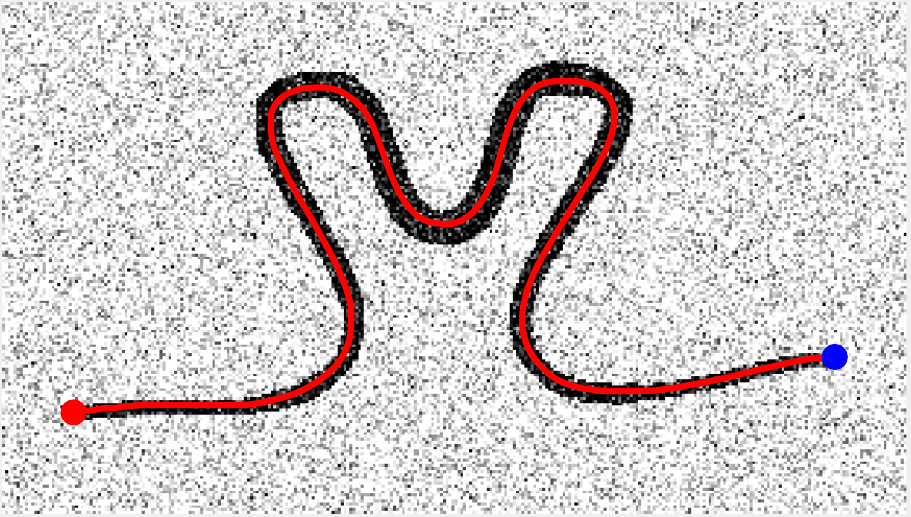}\\
\includegraphics[width=0.31\linewidth]{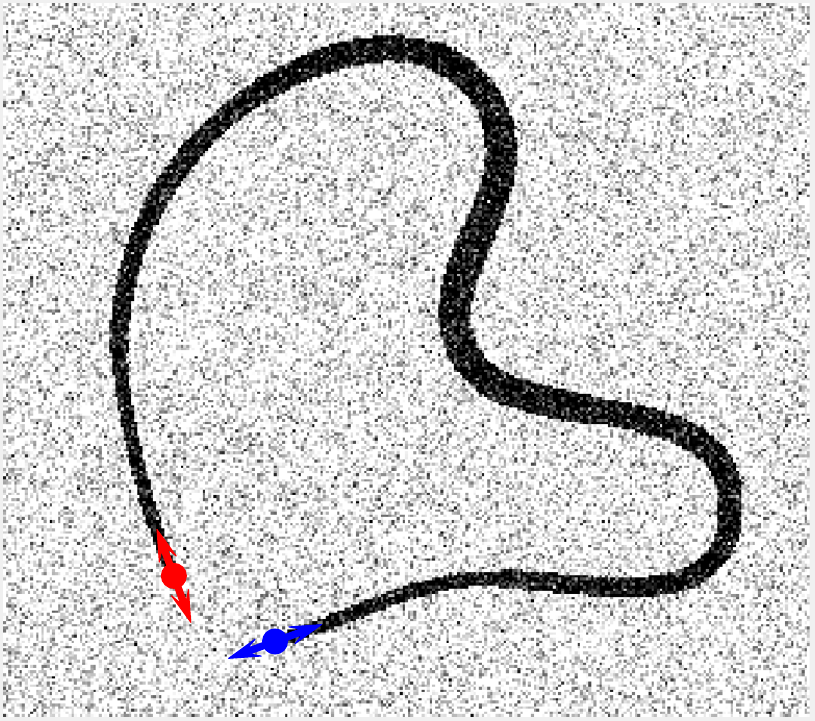}\,\includegraphics[width=0.31\linewidth]{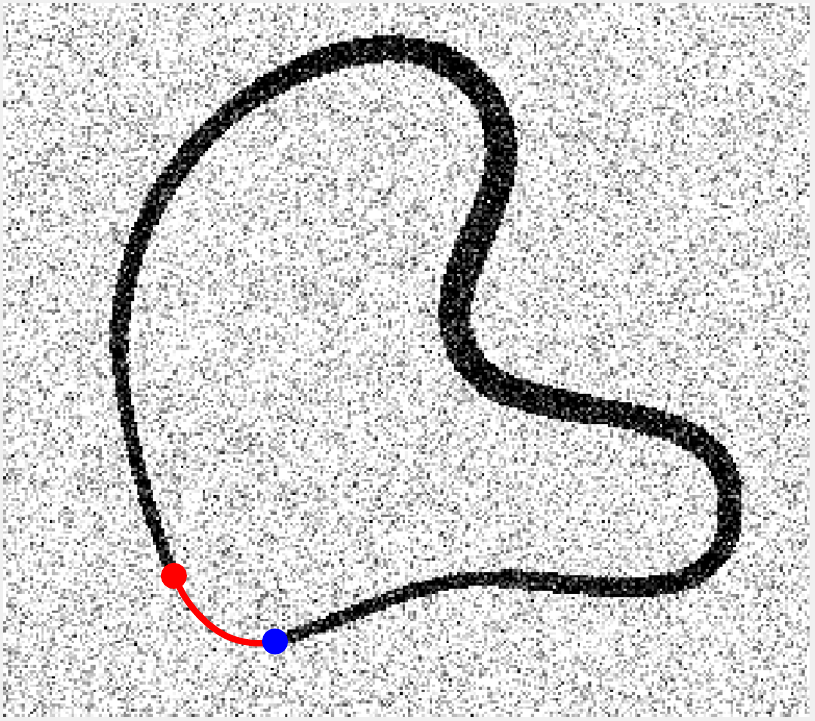}\,\includegraphics[width=0.31\linewidth]{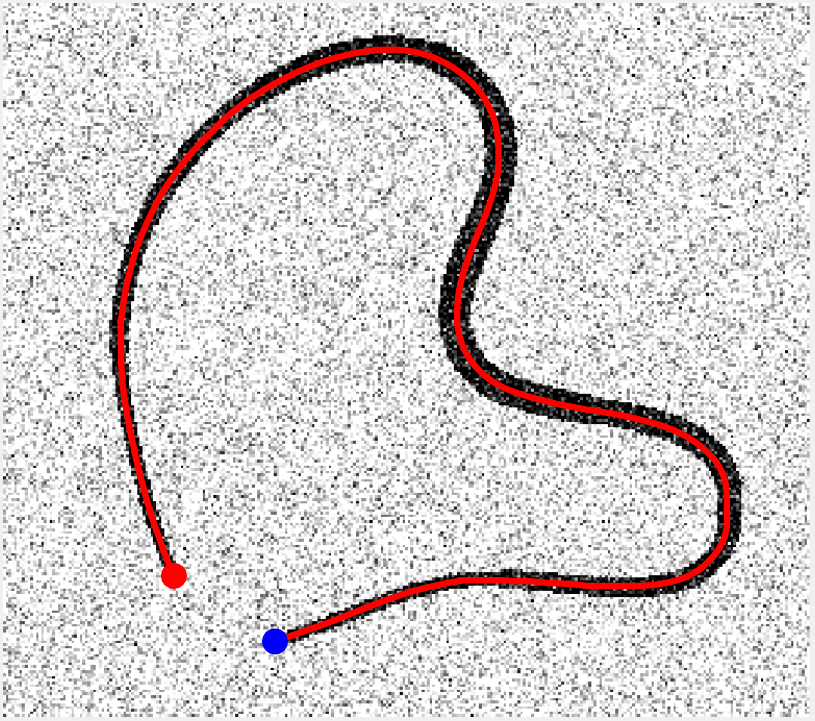}\\
\includegraphics[width=0.31\linewidth]{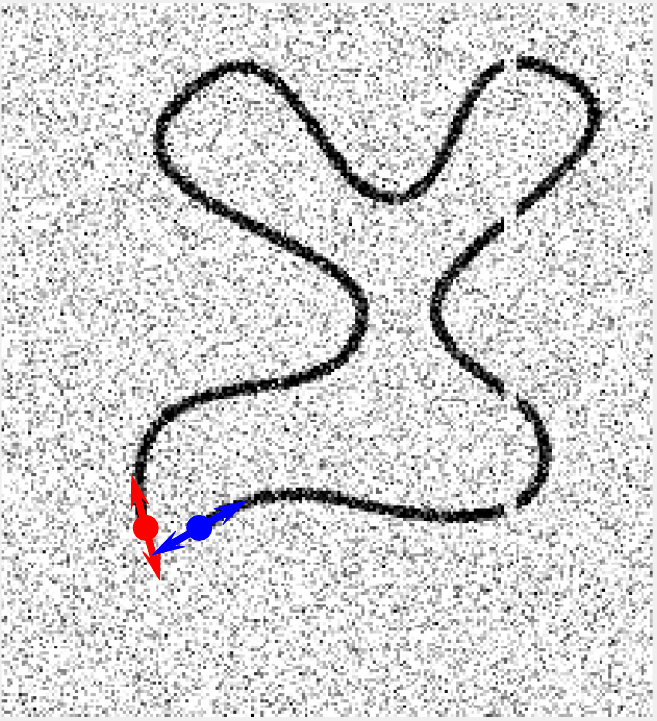}\,\includegraphics[width=0.31\linewidth]{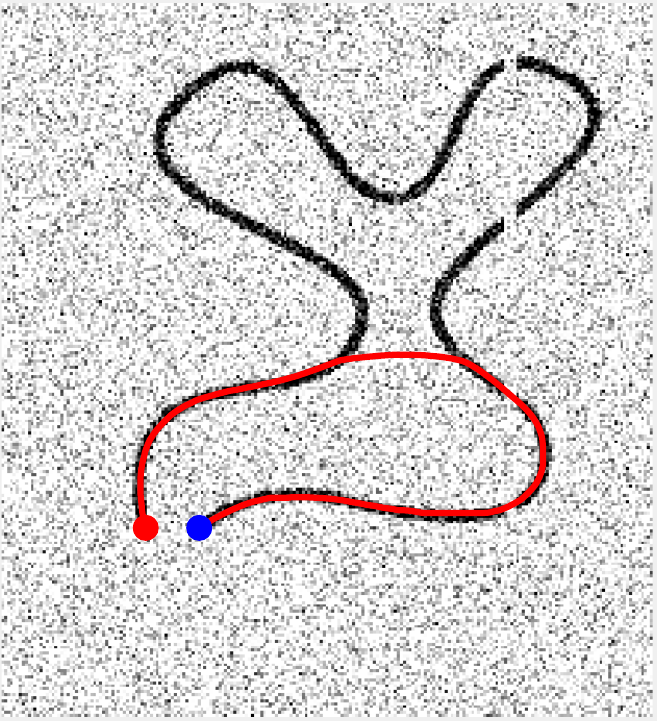}\,\includegraphics[width=0.31\linewidth]{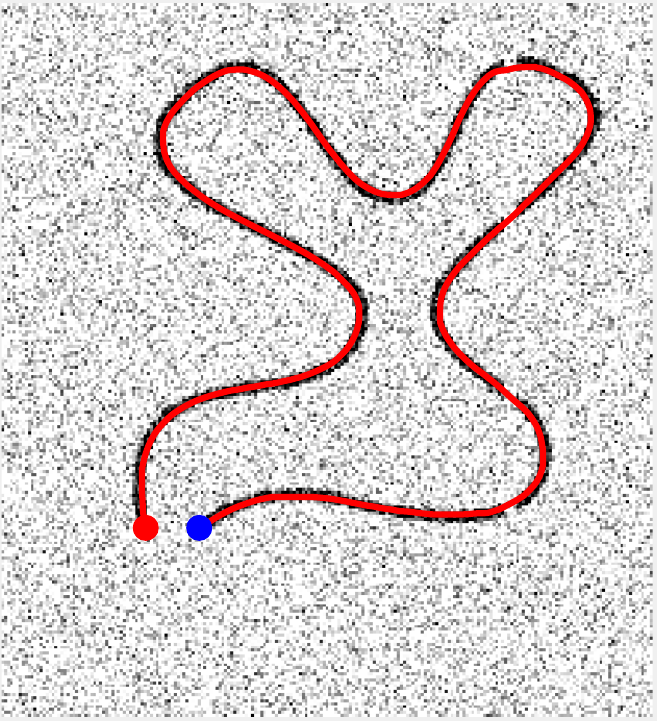}\\
\includegraphics[width=0.31\linewidth]{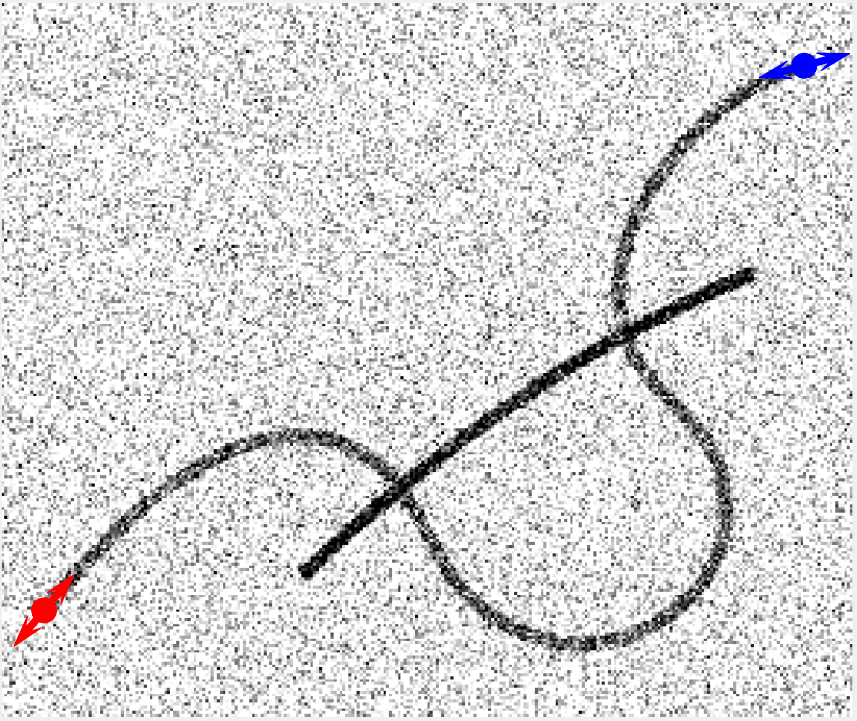}\,\includegraphics[width=0.31\linewidth]{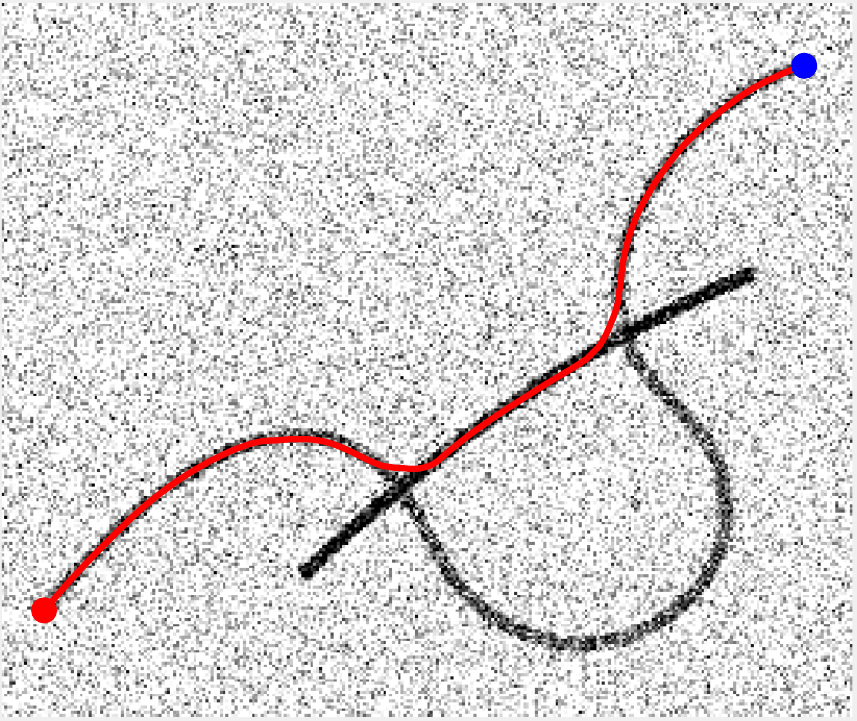}\,\includegraphics[width=0.31\linewidth]{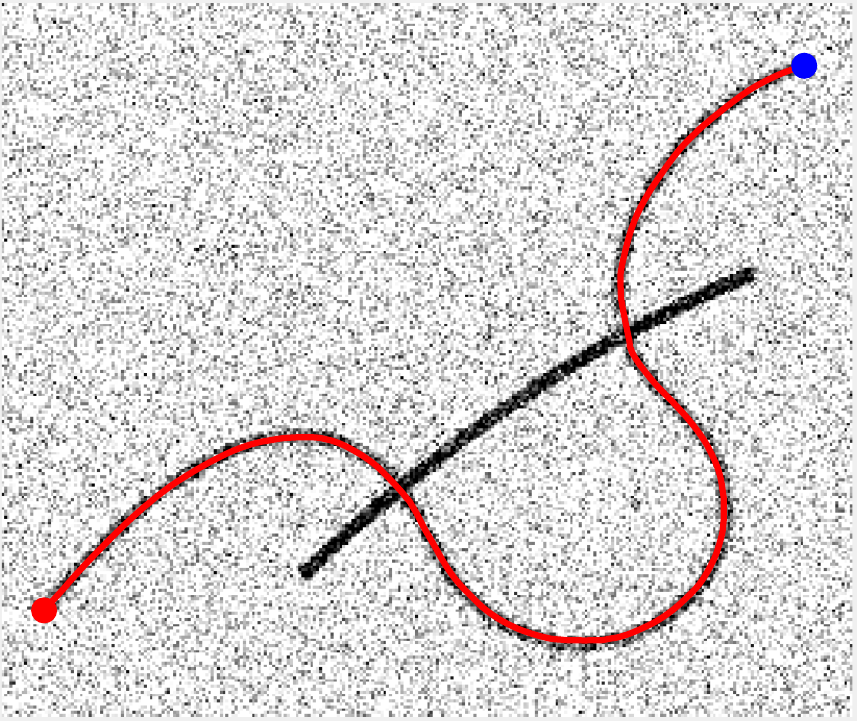}\\
\includegraphics[width=0.31\linewidth]{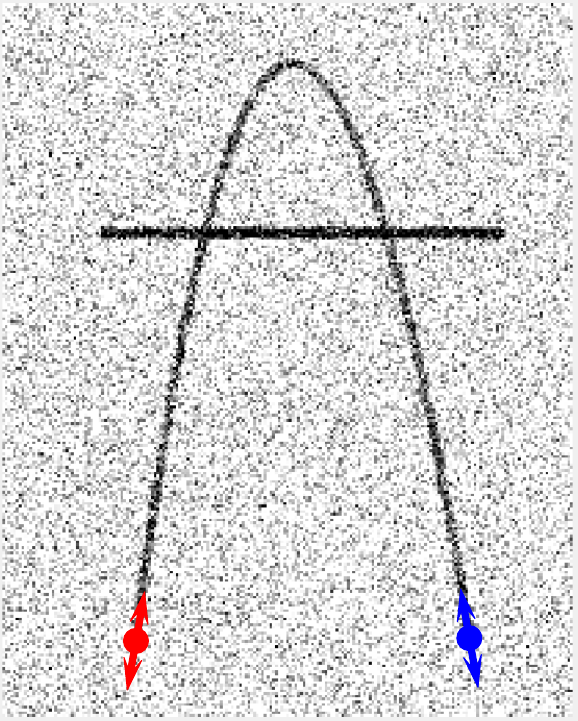}\,\includegraphics[width=0.31\linewidth]{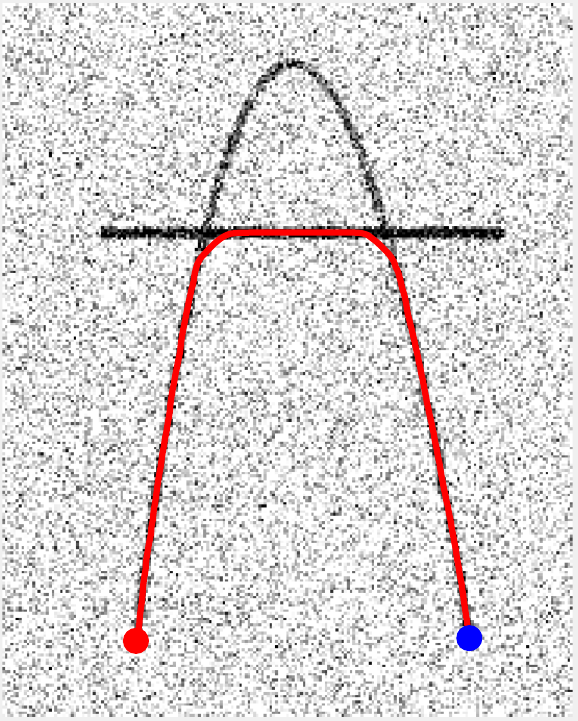}\,\includegraphics[width=0.31\linewidth]{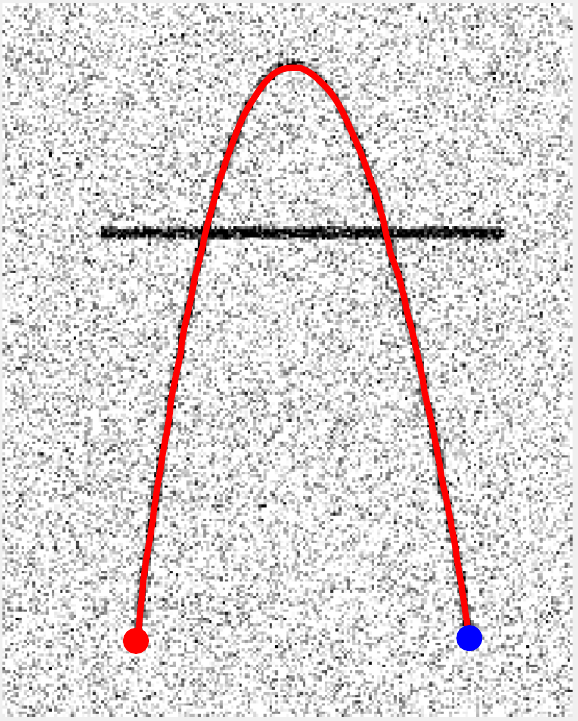}
\caption{Qualitative comparison results on synthetic images  blurred by additive Gaussian noise.  The initialization for each image is shown in column $1$, where the red and blue dots with the respective arrows indicate the source and target points. The physical projections of the geodesic paths from the classical elastica model and the curvature prior elastica model are respectively illustrated with red lines in columns $2$ and $3$.}
\label{fig_CompSyn} 
\end{figure}

Fig.~\ref{fig_CurveFitting} illustrates a geodesic path computed using the curvature prior elastica model, where the curvature prior term $\omega : \bM \to \bR$ is estimated pointwise from a given smooth curve as described in~\cref{eq_PriorConstruction}, and the cost function $\psi \equiv 1$ is constant. The given reference curve is shown on Fig.~\ref{fig_CurveFitting}a, and connects an endpoint $s$ to another one $y$, with tangent directions $\dn(\theta_s)$ and $\dn(\theta_y)$ shown as red and blue arrows respectively. The blue solid line in Fig.~\ref{fig_CurveFitting}b illustrates the physical projection of the  geodesic path obtained using the variant of the  elastica model, equipped with the estimated curvature prior map $\omega$ described above, from the source point $\fs=(s,\theta_s)$ to the target point $\fy=(y,\theta_y)$. In Fig.~\ref{fig_CurveFitting}b, the reference smooth curve is shown in red and dashed. When the curvature penalization parameter $\beta$ is sufficiently large, as in this experiment where we use $\beta=20$, the physical projection of the computed geodesic path should be almost overlaid onto the reference smooth curve, since they have almost identical curvature (recall that by construction the energy functional penalizes the square of their curvature difference), and since they share their endpoints and tangents at these endpoints. More precisely, the physical projection of the computed geodesic path converges to the reference curve as $\beta \to \infty$, and to a straight line joining the endpoints as $\beta \to 0$.
The computed geodesic path shown in Fig.~\ref{fig_CurveFitting}b indeed well approximates the given reference curve, illustrating the effectiveness of the curvature prior in the considered variant of the classical elastica model.

\subsection{Qualitative Comparison}
In Fig.~\ref{fig_CompSyn}, we compare results obtained with the classical elastica model, and with the curvature prior elastica model, on five synthetic images.
A common feature for the curvilinear structures in those images is that each of them involves strongly bent segments. 
The first column of Fig.~\ref{fig_CompSyn} shows the initialization information for each synthetic image, where the blue and red dots together with the arrows stand for the source and target points respectively used to define the sets $\ks$ and $\ky$. Columns $2$ and $3$ illustrate the physical projections of the computed geodesic paths from the classical elastica model and the curvature prior elastica model, respectively. 
More specifically, in column $2$ of rows $1$ to $3$, we can observe that the physical projections of the geodesic paths from the classical elastica model are trapped into a shortcut problem, i.e.\ the projected paths travel along non-target regions. The reason is that the ``correct'' path features strongly curved segments whose squared curvature accumulate and lead to a high value of the classical elastica energy, hence it is not selected by this model. In contrast, in column $3$ of the same rows, the curvature prior elastica model indeed leads to accurate extraction of the target structures, even through the two endpoints of the structures in rows $2$ and $3$ are very close to each other. 
In rows $4$ and $5$, each of the curvilinear structures is interrupted by an almost straight segment with slightly stronger appearance features, i.e.\ with higher values of the orientation score along this segment, yielding challenging crossing structures. In these tests, the physical projections of the geodesic paths from the classical elastica model pass through the interrupting segments, yielding a short branches combination problem~\citep{chen2018minimal}, i.e.\ the obtained geodesic paths travel different segments not belonging to the targets. In contrast, the curvature prior elastica model produces satisfactory results which are able to accurately delineate the target curvilinear structures, as shown in column $3$. 

\begin{figure}[t]
\centering
\includegraphics[width=0.33\linewidth]{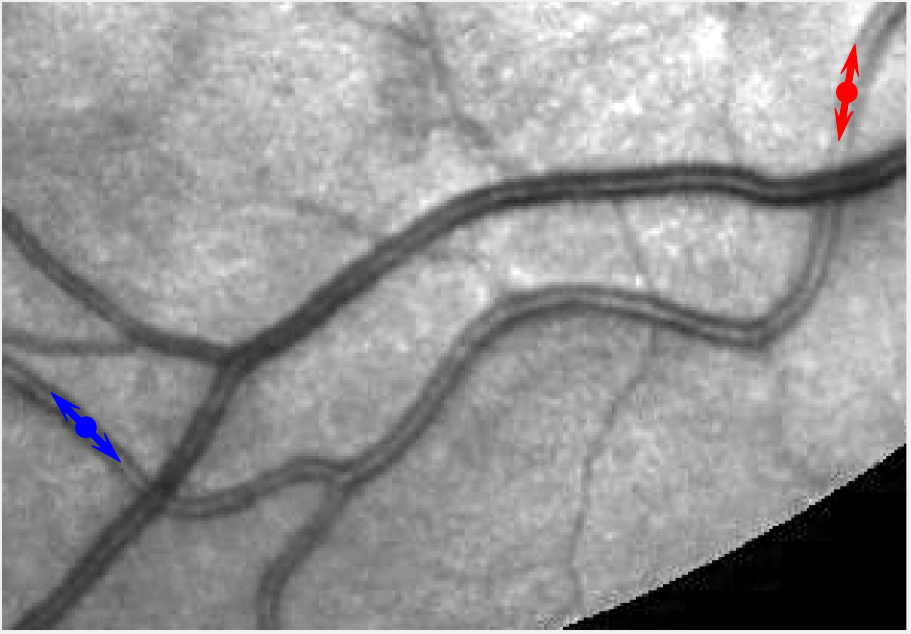}\,\includegraphics[width=0.33\linewidth]{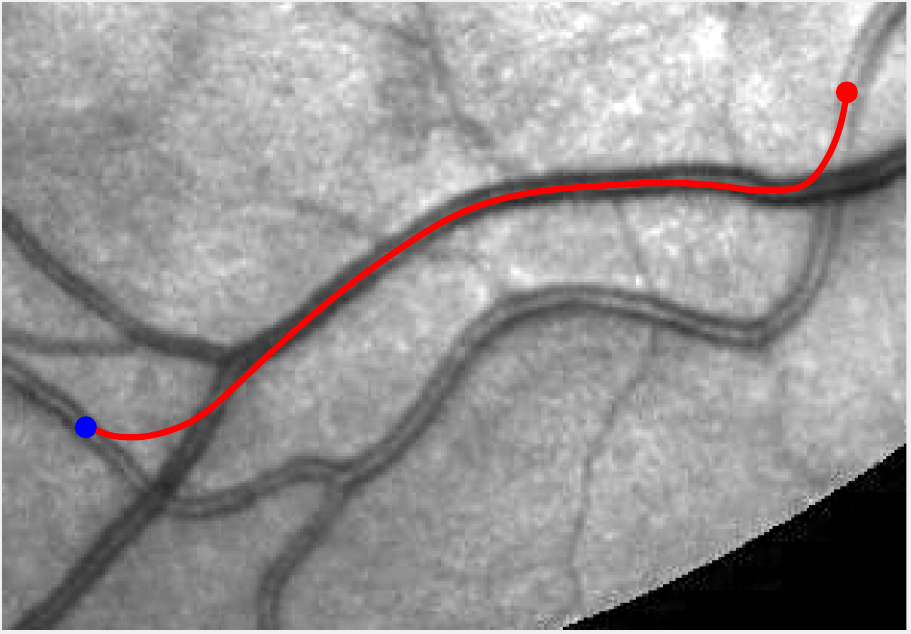}\,\includegraphics[width=0.33\linewidth]{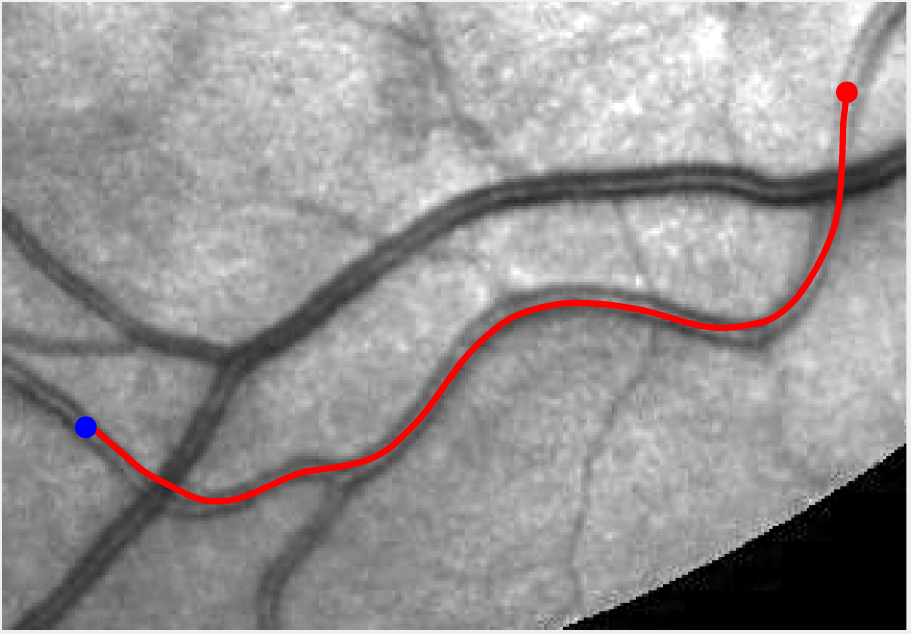}\\
\includegraphics[width=0.33\linewidth]{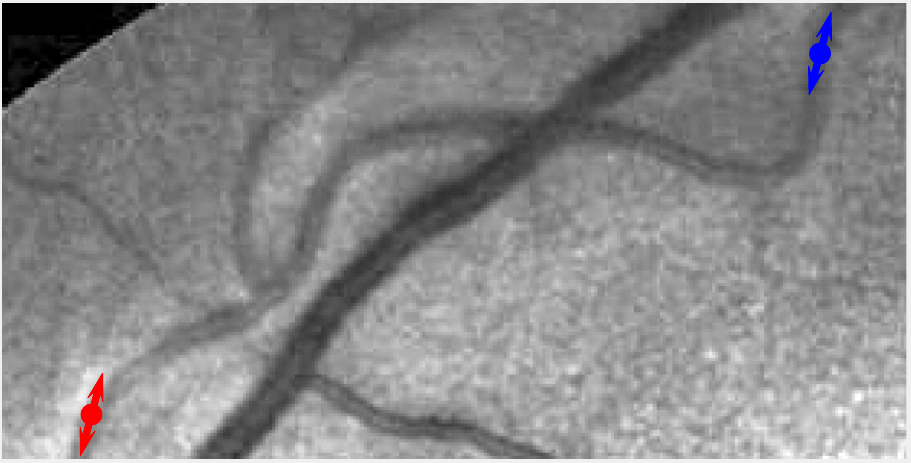}\,\includegraphics[width=0.33\linewidth]{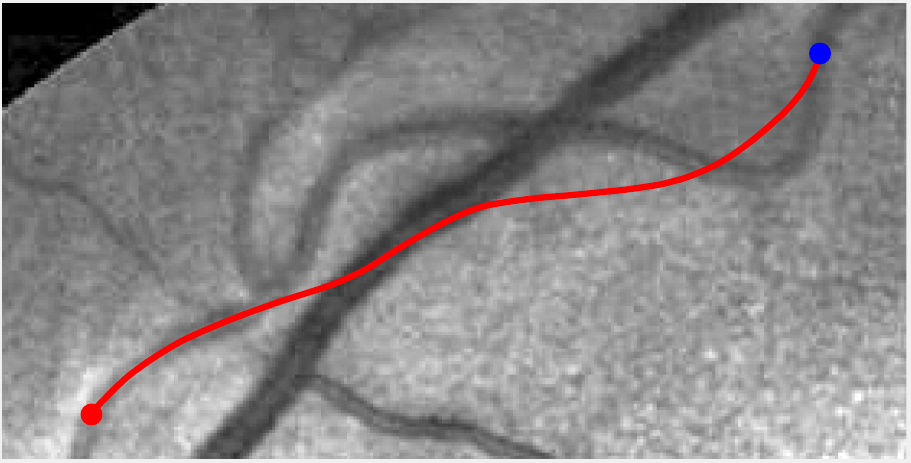}\,\includegraphics[width=0.33\linewidth]{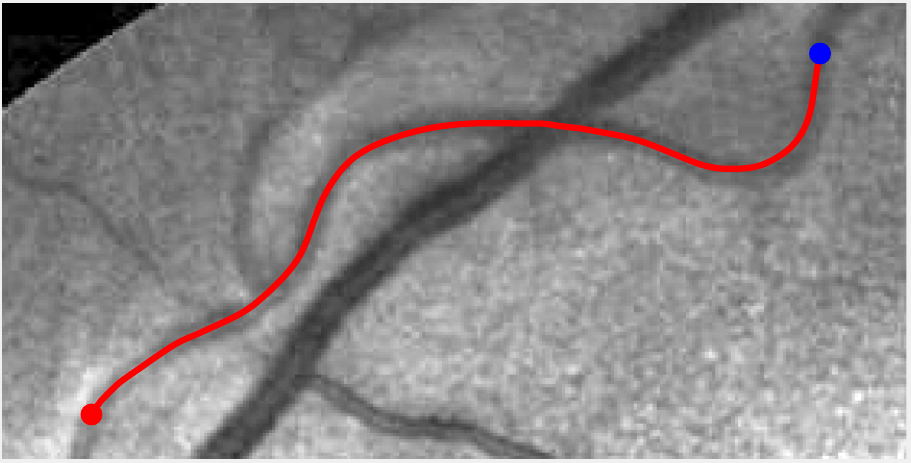}\\
\includegraphics[width=0.33\linewidth]{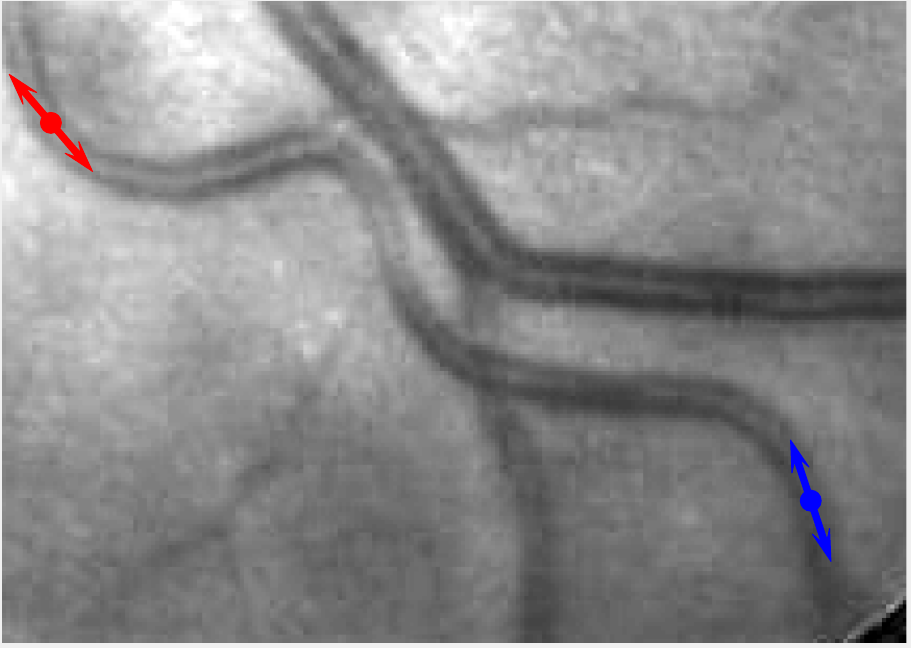}\,\includegraphics[width=0.33\linewidth]{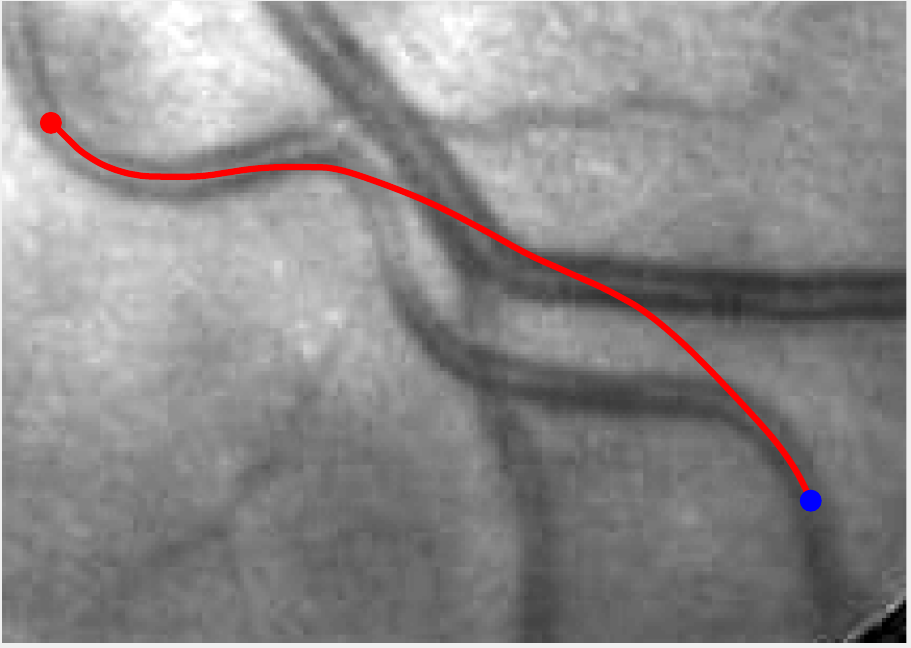}\,\includegraphics[width=0.33\linewidth]{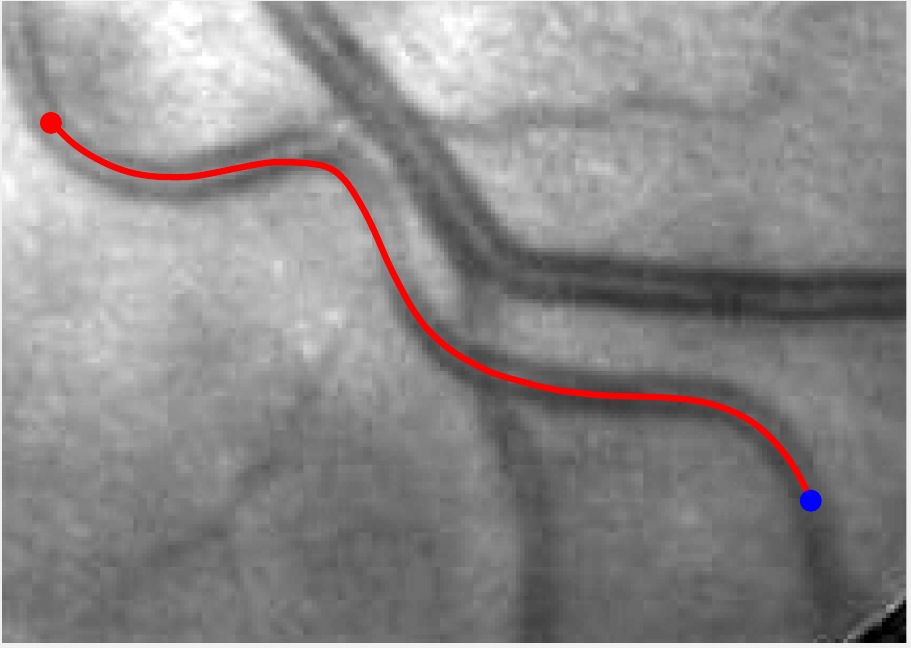}
\caption{Qualitative comparison results on retinal images. \textbf{Left}: The initialization. The blue and red dots with the associated arrows indicate the source and target points. \textbf{Center} and \textbf{Right}: The computed minimal geodesic paths (red lines) for the classical elastica model and the curvature prior elastica model, respectively.}
\label{fig_CompRetina}
\end{figure}

\begin{figure}[t]
\centering
\includegraphics[width=0.33\linewidth]{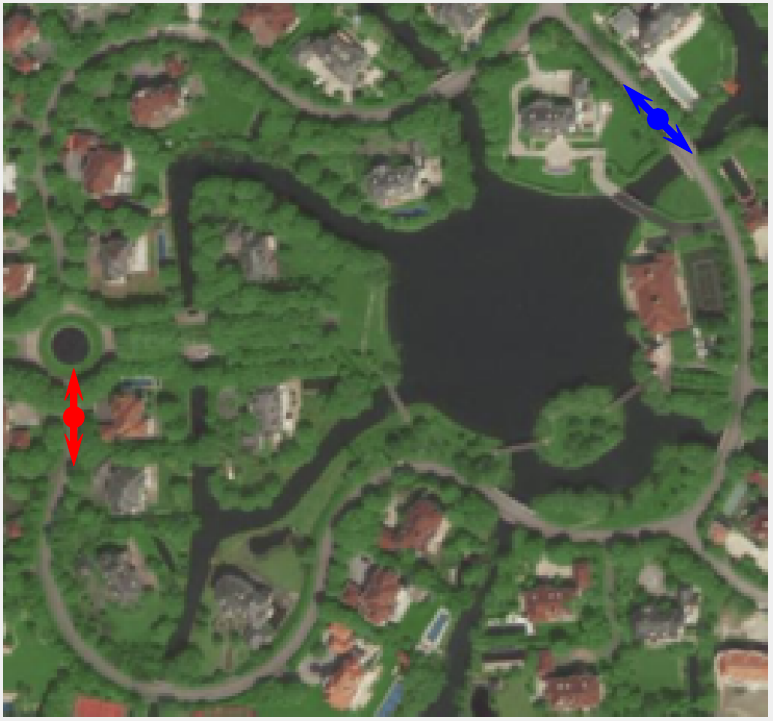}\,\includegraphics[width=0.33\linewidth]{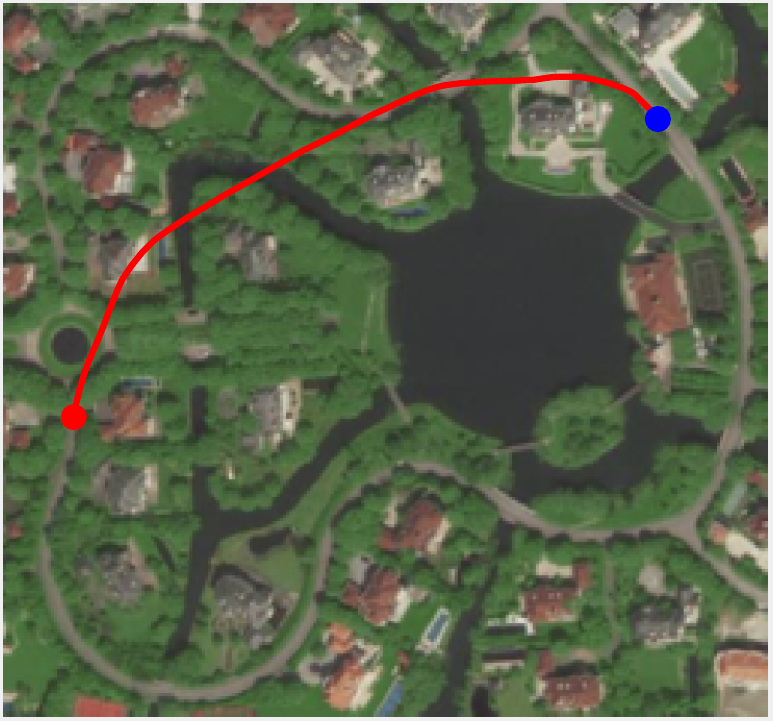}\,\includegraphics[width=0.33\linewidth]{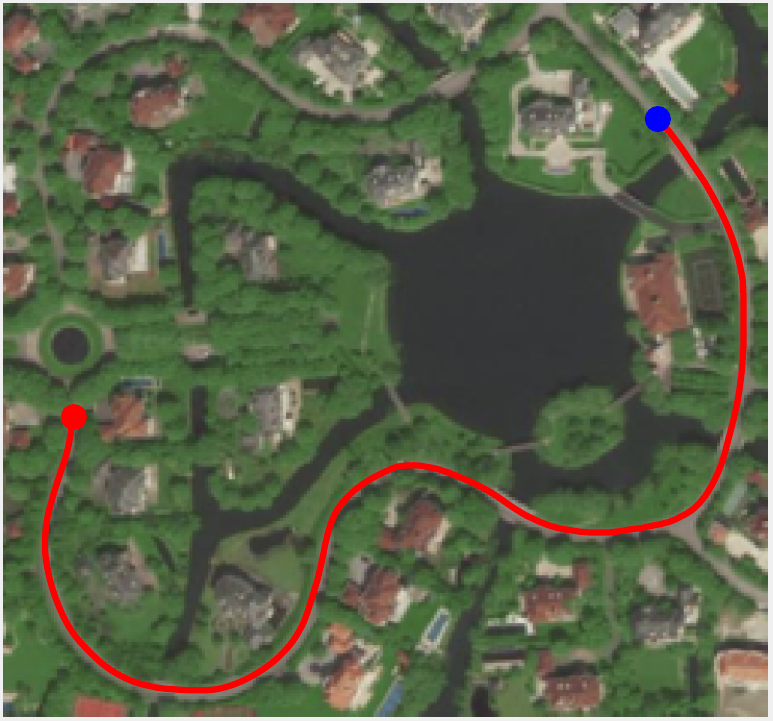}\\
\includegraphics[width=0.33\linewidth]{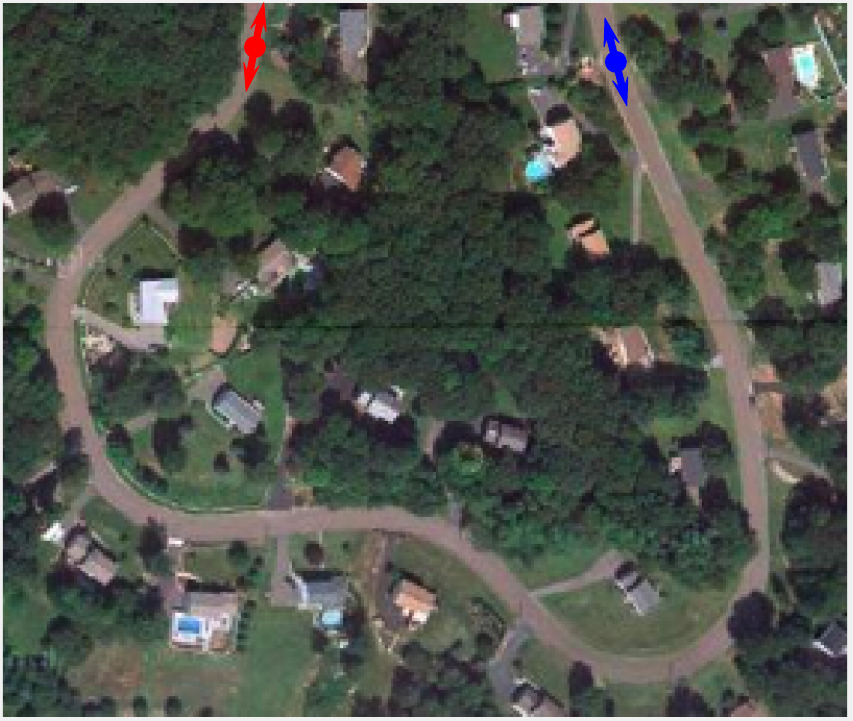}\,\includegraphics[width=0.33\linewidth]{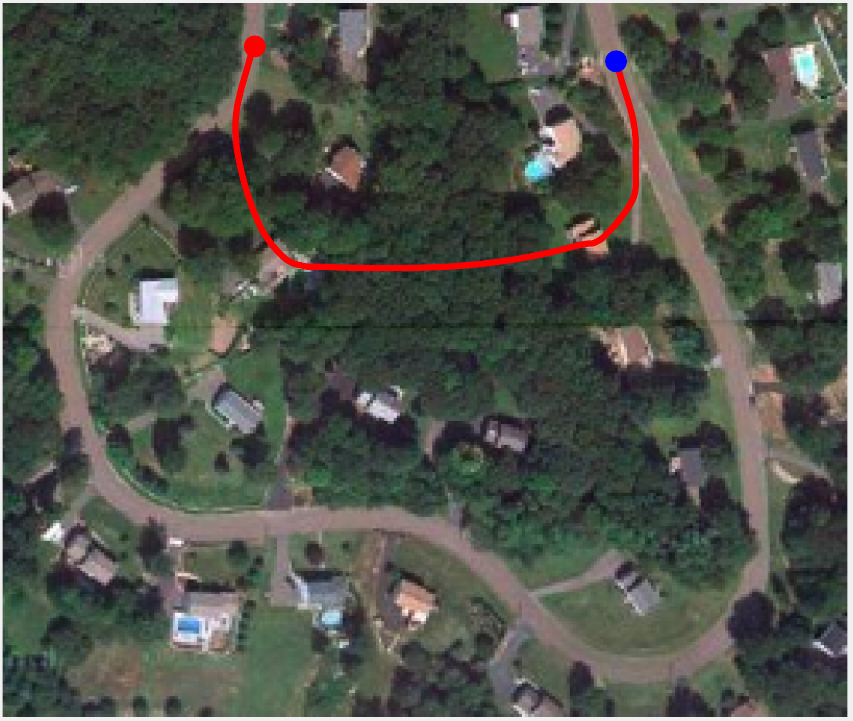}\,\includegraphics[width=0.33\linewidth]{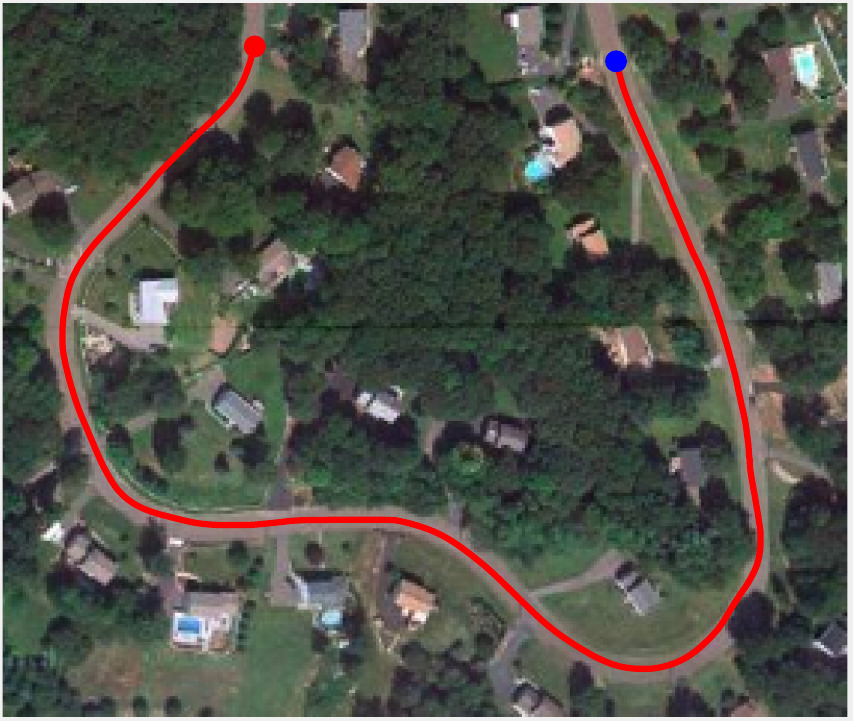}
\caption{Qualitative comparison results on road images. \textbf{Left}: The initialization. The red and cyan dots with the respective arrows indicate the source and target points. \textbf{Center} and \textbf{Right}: The computed minimal geodesic paths (red lines) for the classical elastica model and the curvature prior elastica model, respectively.}
\label{fig_CompRoads}
\end{figure}

Fundus vessel tracking is a task of fundamental importance in retinal imaging and the relevant disease diagnosis. In Fig.~\ref{fig_CompRetina}, we compare the curvature prior elastica model to the classical elastica model in $3$ patches of retinal fundus images, chosen from the IOSTAR dataset~\citep{zhang2016robust}. In this experiment, the goal is to extract artery blood vessels, given two endpoints of each target vessel, as shown in the left column of Fig.~\ref{fig_CompRetina}. In a typical retinal fundus image, the major artery vessels usually exhibit weaker appearance features than those of the nearby vein vessels, and often cross over vein vessels yielding junction structures. An additional difficulty is that the target artery vessels also consist of segments with strong tortuosity. For the classical Euler-Mumford elastica model, one needs to assign a low importance weight to the curvature term, in order to avoid the underlying shortcut problem at those segments. However, this in turn encourages the computed paths to pass through the stronger neighboring vein vessels (whose cost $\psi$ is lower), thus increasing the risk of the short branches combination problem, see the center column of Fig.~\ref{fig_CompRetina}.  In contrast, the minimization of the energy~(\ref{eq_VariantLiftedEnergy}) defining the curvature prior elastica model favors optimal geodesic paths whose curvature is close to $\omega$, allowing to  handle scenarios where challenging structures such as crossings and strongly bent segments appear, see the right column of Fig.~\ref{fig_CompRetina}.

Another practical application for geodesic models is road tracking from aerial images, as demonstrated in Fig.~\ref{fig_CompRoads}. In this experiment, the centerlines of the objective roads have high Euclidean length, and simultaneously feature segments with low turning radii, as shown in Fig.~\ref{fig_CompRoads}a. We can see that the classical elastica model generates tracking results which suffer from a shortcut problem, see the center column of Fig.~\ref{fig_CompRoads}. In contrast, accurate results derived from the curvature prior elastica model are observed in the right column of Fig.~\ref{fig_CompRoads}, thanks to the use of the curvature prior term which alleviates difficulties associated to the complex scenarios.

\begin{table}[t]
\centering
\caption{Quantitive comparison between the classical Euler-Mumford elastica model and the curvature prior elastica model on synthetic images, illustrated by the average and the standard deviation of the Jaccard score values.}
\label{table_HD_Syn}
\setlength{\tabcolsep}{5.5pt}
\renewcommand{\arraystretch}{1.5}
\begin{tabular}{l c c c c c c c c}
\midrule
& &\multicolumn{2}{c}{Elastica Model} & &\multicolumn{2}{c}{Curvature Prior Elastica Model}\\
\cline{3-4}  \cline{6-7} 
&   & Ave. & Std.   & & Ave.  & Std.   \\
\hline
$\beta=4.5$&           &$52.72\%$  &$0.27$   & &$91.83\%$  &$0.03$ \\
$\beta=6.0$&           &$51.47\%$  &$0.26$   & &$90.61\%$  &$0.07$ \\
\bottomrule
\end{tabular}
\end{table} 

\subsection{Quantitative Comparison}
We perform a quantitative evaluation of the performance and stability of the classical elastica model and of the  curvature prior elastica model on synthetic images, when one varies the parameters $\beta$ and $\alpha$ involved in these models, see Eqs.~\eqref{eq_VariantLiftedEnergy} and~\eqref{eq_CostFunction}. These parameters determine the balance between the image data and the curvature regularization. The test data are generated by incorporating different levels of additive Gaussian noise to the original binary images of those shown in Fig.~\ref{fig_CompSyn}. More specifically, we add zero-mean Gaussian noise with $16$ levels of normalized variance values between $0$ and $0.15$ to each clean image, producing $16\times5=80$ test images. Both geodesic models are configured by choosing parameters $\beta\in\{4.5,\,6\}$ and $\alpha\in\{3,4,5\}$, thus yielding $6$ runs per test image.
In each test, the accuracy of the curvilinear structure tracking is measured using the Jaccard score between two bounded regions $A,B\subset\bR^2$, defined as
\begin{equation}
\label{eq_Jaccard}
\JS(A,B):=\frac{|A\cap B|}{|A\cup B|},
\end{equation}
where $|A|$ represents the area of the region $A$. 
In our case, the regions $A$ and $B$ are defined as tubular neighborhoods with a fixed radius of $6$ grid points along the ground truth centerline and the physical projection of the computed geodesic path, respectively. 
Table~\ref{table_HD_Syn} shows the quantitative results comparing the classical elastica model and the curvature prior elastica model, in terms of the Jaccard score between the ground truth centerlines and the  physical projections of geodesic paths, see~\eqref{eq_Jaccard}.
In each row of Table~\ref{table_HD_Syn}, the average (Ave.) and standard deviation (Std.) values of $\JS$, which are derived from $480$ runs w.r.t. different values of the parameters $\alpha$ and $\beta$. Overall, the results from the classical Euler-Mumford elastica model in general exhibit low average values of $\JS$, and large standard deviation, reflecting the fact that various shortcut and short branches combination problems occur in most of the tests. In contrast, we observe much higher average values of $\JS$ using the curvature prior elastica model, and a smaller standard deviation, illustrating its ability to accurately and robustly extract curvilinear structures in the presence of strongly bent segments and junctions.

\section{Conclusion and Discussion}
In this work, we present a nonlinear HJB PDE approach to numerically compute the curve which globally minimizes a variant of the classical Euler-Mumford elastica energy, involving a curvature prior enhancement. The main contributions include 
(i) the establishment of the explicit expression of the Hamiltonian of the curvature prior elastica model, in such way that globally optimal paths can be characterized using the HJB PDE framework,   
(ii) the design of a finite differences scheme based on the HFM method for estimating the numerical solution to the associated HJB PDE, and 
(iii) the introduction of an efficient and robust method for computing the curvature prior in the context of curvilinear structure tracking. Numerical experiments indeed demonstrate promising results in medical and aerial images.

From the modeling standpoint, rather than penalizing the difference to a given curvature prior $\omega$ as proposed here, an alternative interesting route would be to penalize the variation of curvature along the path. 
The latter approach indeed appears to be appropriate and reasonable in relevant applications; for instance it would naturally address the shortcut problem, a segmentation artefact which usually involves sharp turns. 
For that purpose, one may for instance introduce in the elastica energy defined in~\eqref{eq_PriorOriginalEnergy} the squared derivative $\kappa'(\cdot)^2$ of the path curvature. However, this defines a third-order geodesic model, which raises substantial numerical difficulties.
A closely related problem considered in the literature is the computation of geodesic paths with bounded derivative of curvature, however the approach~\cite{bakolas2009curvature_derivative} cannot take into account a data-driven cost function, which is required by image processing applications.
Another related problem is the penalization of torsion, a quantity which is also defined in terms of the third-order derivatives of the path, and is approximated in~\cite{ulen2015torsion} with the torsion of some local Bezier interpolations of the path. In the HJB PDE framework, considered in this paper, the computation of minimal geodesic paths for third-order planar models as considered here would involve the numerical solution of a strongly anisotropic PDE posed on the four dimensional domain $\Omega \times \bS^1 \times \bR \ni (\fx, \theta, \kappa)$, which implies an increased computational cost. We regard it as an opportunity for future research.

We have introduced a practical method for computing the curvature prior $\omega$, using a set of reconstructed piecewise smooth centerlines which reflect the geometry of curvilinear structures in the image. In our work, those centerlines are numerically generated by applying a skeletonization procedure on the segmented curvilinear structures.
A proper integration of the voting method~\cite{rouchdy2013geodesic} and of the keypoints detection method~\cite{benmansour2009fast,kaul2012detecting}  might be an alternative way for constructing the centerlines, where the dependency on the segmentation process can be removed. Furthermore, the estimation of the curvature of curves or surfaces from an orientation score map could also be investigated for  computing $\omega$, e.g.~\cite{bekkers2015curvature}. Unfortunately, the orientation score values around the positions where complex junctions appear usually lack reliability, thus requiring additional procedures to alleviate this problem. 

We have established in~\cref{prop:metric_linear_composition} a general theoretical framework for deriving the Hamiltonian and control set of a curvature-penalized geodesic model equipped with a curvature prior enhancement, and have shown its application to the Euler-Mumford elastica model. This framework can also be exploited to generalize the Reeds-Sheep forward model~\cite{duits2018optimal} and the Dubins car model~\cite{mirebeau2018fast}. 
Finally, we intend to use the curvature-penalized geodesic models with curvature prior enhancement to address various image segmentation problems.

\appendices

\section{Finite Difference Scheme for the Hamiltonian $\kH$}
\label{sec_FDS}
We introduce here the numerical scheme for estimating the geodesic distance map associated to the Hamiltonian $\kH$ based on the state-of-the-art HFM method~\cite{mirebeau2018fast,mirebeau2019riemannian}. Recall that the distance map $\cD_\fs$ is the unique viscosity solution to the static HJB PDE 
\begin{equation}
\label{eq_HJB_biased}
	\kH(\fx,d\cD_\fs(\fx))=\tfrac{1}{2}\psi(\fx)^2, \quad \forall \fx\in\bM\backslash\{\fs\},
\end{equation} 
with the same point source $\fs$ and outflow boundary conditions as in \eqref{eq_HJB}. 
We discretize this PDE over the Cartesian grid
\begin{equation*}
\bM_h:=(h\bZ^2\times (h\bZ/2\pi\bZ))\cap\bM,
\end{equation*}
where $h$ is the grid scale, which is chosen so that $2\pi/h$ is a positive integer for periodicity. 
Note that $h\bZ/2\pi\bZ = \{0, h, \cdots, (N_\theta-1) h\} \subset \bS^1$, where $N_\theta := 2\pi/h$ represents the number of angles discretized in the orientation dimension. We fix $N_\theta=72$ in the numerical experiments of this paper.

Given a non-zero vector $\dfv\in\bE$ and a relaxation parameter $\ve\in\,]0,1[$, a crucial ingredient of the HFM method \cite{mirebeau2018fast} is an approximation of the form
\begin{equation}
\label{eq_DirectionalApprox}
\< \hfx,\dfv\>_+^2=\sum_{k=1}^K\xi_k^\ve(\dfv)\< \hfx,\dfe^\ve_k(\dfv)\>_+^2+\|\hfx\|^2\|\dfv\|^2\cO(\ve^2),
\end{equation}
for any co-vector $\hfx\in\bE^*$, where $\xi^\ve_k(\dfv)\geq 0$ is a non-negative weight, and $\dfe_k^\ve(\dfv)\in\bZ^3$ is an offset having integer coordinates, for each $1\leq k\leq K$ where $K$ is a positive integer. 
The construction of these weights and offsets relies on Selling's decomposition of positive quadratic forms~\cite{mirebeau2014efficient}, with $K=6$ in dimension $\dim(\bE)=3$, and $\|\dfe_k^\ve(\dfv)\| = \cO(\ve^{-1})$. In practice,  we use $\ve=0.1$. 

By applying the approximation scheme in~\eqref{eq_DirectionalApprox} to the terms of \eqref{eq_Approximation}, we obtain the following expression
\begin{equation}
\label{eq_DirectionalApprox2}
\<\hfx,\,\dkq(\fx,\varphi_l)\>_+^2
\approx\sum_{k=1}^K\xi_{k}^\ve(\dkq(\fx,\varphi_l))\<\hfx,\dfe^\ve_{k}(\dkq(\fx,\varphi_l)) \>_+^2.
\end{equation}

For the sake of readability, we leverage the abbreviations $\xi^\ve_{kl}(\fx)$ and $\dfe^\ve_{kl}(\fx)$ to denote $\xi_{k}^\ve(\dkq(\fx,\varphi_l))$ and $\dfe^\ve_{k}(\dkq(\fx,\varphi_l))$, respectively. Combining \eqref{eq_Approximation} with~\eqref{eq_DirectionalApprox2} yields the following approximation of the Hamiltonian $\kH$
\begin{equation}
\kH(\fx,\hfx)\approx\sum_{l=1}^L\mu_l\sum_{k=1}^K\xi_{kl}^\ve(\fx)\left\<\hfx,\dfe_{kl}^\ve (\fx) \right\>_+^2.	
\end{equation}
The set $\cS^\ve(\fx)=\{\dfe^\ve_{kl}(\fx)\}_{1\leq l\leq L}^{1\leq k \leq K} \subset \bZ^3$ of all offsets, at a given discretization point $\fx \in \bM_h$, is known as the stencil of the HFM method and defines the neighborhood system of this numerical scheme, see Fig.~\ref{fig_Stencil}.

\begin{figure*}[t]
\centering
\includegraphics[height=0.4\linewidth]{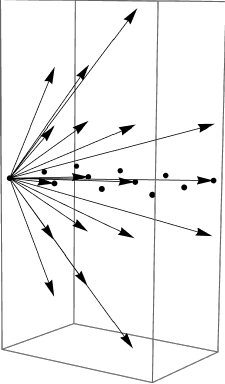}\hspace{10mm}\includegraphics[height=0.4\linewidth]{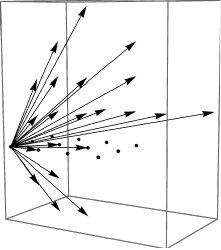}
\caption{Visualizing the stencils associated with the classical elastica model and the curvature prior elastica model. \textbf{Left}: Stencil $\cS^\ve(\fx)$ for the classical Euler-Mumford elastica model with $\beta=2$, $\theta=\pi/6$, and discretization parameters $L=5$, $\ve=0.1$. \textbf{Right}: Likewise for the curvature prior elastica model with $\omega(\fx)=0.5$. Note that the latter is not symmetric  w.r.t.\ the horizontal plane.}
\label{fig_Stencil}
\end{figure*}

Now we can discretize the introduced Hamiltonian $\kH$ using a first order upwind finite difference scheme~\cite{mirebeau2018fast}
\begin{equation}
\label{eq:finite_difference}
\< d\cD_\fs(\fx),\dfe \>_+^2=\left(\frac{\cD_\fs(\fx)-\cD_\fs(\fx-h\dfe)}{h}\right)_+^2+\cO(h),
\end{equation}
for any discretization point $\fx \in \bM_h$ and any offset $\dfe\in\cS^\ve(\fx)$. 
Since the offset $\dfe$ has integer coordinates by construction, one has $\fx-h\dfe \in \bM_h$, and therefore the r.h.s.\ of \eqref{eq:finite_difference} only involves values of the unknown geodesic distance map $\cD_\fs$ on the discretization grid $\bM_h$. 
Finally, we obtain a finite differences discrete counterpart of the HJB PDE~defined in~\eqref{eq_HJB_biased}
\begin{equation} 
\label{eq_DiscreteHJB}
\sum_{l=1}^L\mu_l\sum_{k=1}^K\xi^\ve_{kl}(\fx)\left(\frac{\cD_\fs(\fx)-\cD_\fs(\fx-h\dfe^\ve_{kl}(\fx))}{h}\right)_+^2=\frac{1}{2}\psi(\fx)^2
\end{equation}
for all $\fx \in \bM_h \sm \{\fs\}$, with boundary condition $\cD_\fs(\fs)=0$.

The HFM method computes the numerical solution to the discretized HJB PDE (see~\eqref{eq_DiscreteHJB}) in an efficient single-pass wavefront propagation manner, using a generalization of the original Fast-Marching method~\cite{sethian1999fast}. Assuming that the computation is performed on a finite subset of the Cartesian grid $\bM_h$, containing $N_h$ points, the computation complexity for estimating the solution to~\eqref{eq_DiscreteHJB} is $\cO(L K N_h\ln N_h)$, where $L$ is the discretization parameter as used in~\eqref{eq_Approximation}, and $K = 6$ as observed below \eqref{eq_DirectionalApprox}. In practice, the choice $L=5$ is a good compromise between numerical accuracy and computation cost. Furthermore, the computation time of the HFM method can be deeply speeded up by exploiting a GPU-implemented scheme for estimating the geodesic distances, as proposed in~\citep{mirebeau2023massively}. The codes for the HFM method associated with the variant of the Euler-Mumford elastica model equipped with a curvature prior, involving the construction for the stencils, the computation for the geodesic distance map and the numerical solution to the geodesic backtracking ODE,  can be downloaded from~\href{https://github.com/Mirebeau/HamiltonFastMarching}{https://github.com/Mirebeau/HamiltonFastMarching}.

\bibliographystyle{spbasic}      
\bibliography{minimalPaths}   

\begin{thebibliography}{57}
\providecommand{\natexlab}[1]{#1}
\providecommand{\url}[1]{{#1}}
\providecommand{\urlprefix}{URL }
\expandafter\ifx\csname urlstyle\endcsname\relax
  \providecommand{\doi}[1]{DOI~\discretionary{}{}{}#1}\else
  \providecommand{\doi}{DOI~\discretionary{}{}{}\begingroup
  \urlstyle{rm}\Url}\fi
\providecommand{\eprint}[2][]{\url{#2}}

\bibitem[{Bakolas and Tsiotras(2009)}]{bakolas2009curvature_derivative}
Bakolas E, Tsiotras P (2009) On the generation of nearly optimal, planar paths
  of bounded curvature and bounded curvature gradient. In: 2009 American
  Control Conference, 2009 American Control Conference, pp 385--390,
  \doi{10.1109/acc.2009.5160269}

\bibitem[{Bardi and Capuzzo-Dolcetta(2008)}]{Bardi:2008Viscosity}
Bardi M, Capuzzo-Dolcetta I (2008) {Optimal control and viscosity solutions of
  Hamilton-Jacobi-Bellman equations}. Birkhäuser Boston, MA,
  \doi{10.1007/978-0-8176-4755-1}

\bibitem[{Bekkers et~al(2014)Bekkers, Duits, Berendschot, and ter
  Haar~Romeny}]{bekkers2014multi}
Bekkers E, Duits R, Berendschot T, ter Haar~Romeny B (2014) A multi-orientation
  analysis approach to retinal vessel tracking. Journal of Mathematical Imaging
  and Vision 49:583--610

\bibitem[{Bekkers et~al(2015{\natexlab{a}})Bekkers, Duits, Mashtakov, and
  Sanguinetti}]{bekkers2015pde}
Bekkers EJ, Duits R, Mashtakov A, Sanguinetti GR (2015{\natexlab{a}}) {A PDE
  approach to data-driven sub-Riemannian geodesics in SE(2)}. SIAM J Imag Sci
  8(4):2740--2770

\bibitem[{Bekkers et~al(2015{\natexlab{b}})Bekkers, Zhang, Duits, and ter
  Haar~Romeny}]{bekkers2015curvature}
Bekkers EJ, Zhang J, Duits R, ter Haar~Romeny BM (2015{\natexlab{b}}) Curvature
  based biomarkers for diabetic retinopathy via exponential curve fits in se
  (2). In: Proc. Ophthalmic Medical Image Analysis International Workshop,
  University of Iowa, vol~2

\bibitem[{Bekkers et~al(2018)Bekkers, Chen, and
  Portegies}]{bekkers2018nilpotent}
Bekkers EJ, Chen D, Portegies JM (2018) {Nilpotent approximations of
  sub-Riemannian distances for fast perceptual grouping of blood vessels in 2D
  and 3D}. J Math Imag Vis 60(6):882--899

\bibitem[{Ben-Shahar and Ben-Yosef(2014)}]{ben2014tangent}
Ben-Shahar O, Ben-Yosef G (2014) Tangent bundle elastica and computer vision.
  IEEE Trans Pattern Anal Mach Intell 37(1):161--174

\bibitem[{Benmansour and Cohen(2009)}]{benmansour2009fast}
Benmansour F, Cohen LD (2009) {Fast object segmentation by growing minimal
  paths from a single point on 2D or 3D images}. J Math Imaging Vis
  33(2):209--221

\bibitem[{Benmansour and Cohen(2011)}]{benmansour2011tubular}
Benmansour F, Cohen LD (2011) Tubular structure segmentation based on minimal
  path method and anisotropic enhancement. Int J Comput Vis 92(2):192--210

\bibitem[{Caselles et~al(1997)Caselles, Kimmel, and
  Sapiro}]{caselles1997geodesic}
Caselles V, Kimmel R, Sapiro G (1997) Geodesic active contours. Int J Comput
  Vis 22(1):61--79

\bibitem[{Cetin et~al(2012)Cetin, Demir, Yezzi, Degertekin, and
  Unal}]{cetin2012vessel}
Cetin S, Demir A, Yezzi A, Degertekin M, Unal G (2012) Vessel tractography
  using an intensity based tensor model with branch detection. IEEE Trans Med
  Imag 32(2):348--363

\bibitem[{Chambolle and Pock(2019)}]{chambolle2019total}
Chambolle A, Pock T (2019) Total roto-translational variation. Numer Math
  142(3):611--666

\bibitem[{Chen et~al(2016)Chen, Mirebeau, and Cohen}]{chen2016finsler}
Chen D, Mirebeau JM, Cohen LD (2016) Finsler geodesics evolution model for
  region based active contours. In: Proc. BMVC

\bibitem[{Chen et~al(2017)Chen, Mirebeau, and Cohen}]{chen2017global}
Chen D, Mirebeau JM, Cohen LD (2017) {Global minimum for a Finsler elastica
  minimal path approach}. Int J Comput Vis 122(3):458--483

\bibitem[{{Chen} et~al(2019){Chen}, {Zhang}, and {Cohen}}]{chen2019minimal}
{Chen} D, {Zhang} J, {Cohen} LD (2019) Minimal paths for tubular structure
  segmentation with coherence penalty and adaptive anisotropy. IEEE Trans Image
  Process 28(3):1271--1284

\bibitem[{Chen et~al(2021)Chen, Cohen, Mirebeau, and Tai}]{chen2021elastica}
Chen D, Cohen LD, Mirebeau JM, Tai XC (2021) An elastica geodesic approach with
  convexity shape prior. In: Proc. ICCV, pp 6900--6909

\bibitem[{Cohen and Kimmel(1997)}]{cohen1997global}
Cohen LD, Kimmel R (1997) {Global minimum for active contour models: A minimal
  path approach}. Int J Comput Vis 24(1):57--78

\bibitem[{Dijkstra(1959)}]{dijkstra1959note}
Dijkstra EW (1959) A note on two problems in connexion with graphs. Numer Math
  1(1):269--271

\bibitem[{Duits et~al(2018)Duits, Meesters, Mirebeau, and
  Portegies}]{duits2018optimal}
Duits R, Meesters SP, Mirebeau JM, Portegies JM (2018) {Optimal paths for
  variants of the 2D and 3D Reeds--Shepp car with applications in image
  analysis}. J Math Imag Vis 60(6):816--848

\bibitem[{Franceschiello et~al(2019)Franceschiello, Mashtakov, Citti, and
  Sarti}]{franceschiello2019geometrical}
Franceschiello B, Mashtakov A, Citti G, Sarti A (2019) Geometrical optical
  illusion via sub-riemannian geodesics in the roto-translation group. Differ
  Geom Appl 65:55--77

\bibitem[{Franken and Duits(2009)}]{franken2009crossing}
Franken E, Duits R (2009) Crossing-preserving coherence-enhancing diffusion on
  invertible orientation scores. Int J Comput Vis 85:253--278

\bibitem[{Hummel and Zucker(1983)}]{hummel1983foundations}
Hummel RA, Zucker SW (1983) On the foundations of relaxation labeling
  processes. IEEE Trans Pattern Anal Mach Intell (3):267--287

\bibitem[{Jbabdi et~al(2008)Jbabdi, Bellec, Toro, Daunizeau,
  P{\'e}l{\'e}grini-Issac, and Benali}]{jbabdi2008accurate}
Jbabdi S, Bellec P, Toro R, Daunizeau J, P{\'e}l{\'e}grini-Issac M, Benali H
  (2008) Accurate anisotropic fast marching for diffusion-based geodesic
  tractography. Int J Biomed Imaging 2008:2

\bibitem[{Kaul et~al(2012)Kaul, Yezzi, and Tsai}]{kaul2012detecting}
Kaul V, Yezzi A, Tsai Y (2012) Detecting curves with unknown endpoints and
  arbitrary topology using minimal paths. IEEE Trans Pattern Anal Mach Intell
  34(10):1952--1965

\bibitem[{Kimmel and Sethian(2001)}]{kimmel2001optimal}
Kimmel R, Sethian JA (2001) Optimal algorithm for shape from shading and path
  planning. J Math Imaging Vis 14:237--244

\bibitem[{Kimmel et~al(1995)Kimmel, Shaked, Kiryati, and
  Bruckstein}]{kimmel1995skeletonization}
Kimmel R, Shaked D, Kiryati N, Bruckstein AM (1995) Skeletonization via
  distance maps and level sets. Comput Vis Image Underst 62(3):382--391

\bibitem[{Lam et~al(1992)Lam, Lee, and Suen}]{lam1992thinning}
Lam L, Lee SW, Suen CY (1992) {Thinning methodologies-A comprehensive survey}.
  IEEE Trans Pattern Anal Mach Intell 14(09):869--885

\bibitem[{Law and Chung(2008)}]{law2008three}
Law MWK, Chung ACS (2008) {Three Dimensional Curvilinear Structure Detection
  Using Optimally Oriented Flux}. In: Proc. ECCV, pp 368--382

\bibitem[{Li and Yezzi(2007)}]{li2007vessels}
Li H, Yezzi A (2007) {Vessels as 4-D curves: Global minimal 4-D paths to
  extract 3-D tubular surfaces and centerlines}. IEEE Trans Med Imag
  26(9):1213--1223

\bibitem[{Liao et~al(2018)Liao, W{\"o}rz, Kang, Cho, and
  Rohr}]{liao2018progressive}
Liao W, W{\"o}rz S, Kang CK, Cho ZH, Rohr K (2018) {Progressive minimal path
  method for segmentation of 2D and 3D line structures}. IEEE Trans Pattern
  Anal Mach Intell 40(3):696--709

\bibitem[{Melonakos et~al(2008)Melonakos, Pichon, Angenent, and
  Tannenbaum}]{melonakos2008finsler}
Melonakos J, Pichon E, Angenent S, Tannenbaum A (2008) Finsler active contours.
  IEEE Trans Pattern Anal Mach Intell 30(3):412--423

\bibitem[{Mirebeau(2014{\natexlab{a}})}]{mirebeau2014anisotropic}
Mirebeau JM (2014{\natexlab{a}}) Anisotropic fast-marching on cartesian grids
  using lattice basis reduction. SIAM J Numer Anal 52(4):1573--1599

\bibitem[{Mirebeau(2014{\natexlab{b}})}]{mirebeau2014efficient}
Mirebeau JM (2014{\natexlab{b}}) {Efficient fast marching with Finsler
  metrics}. Numer Math 126(3):515--557

\bibitem[{Mirebeau(2018)}]{mirebeau2018fast}
Mirebeau JM (2018) Fast-marching methods for curvature penalized shortest
  paths. J Math Imag Vis 60(6):784--815

\bibitem[{Mirebeau(2019)}]{mirebeau2019riemannian}
Mirebeau JM (2019) {Riemannian fast-marching on Cartesian grids, using
  Voronoi's first reduction of quadratic forms}. SIAM J Numer Anal
  57(6):2608--2655

\bibitem[{Mirebeau and Portegies(2019)}]{mirebeau2019hamiltonian}
Mirebeau JM, Portegies J (2019) Hamiltonian fast marching: A numerical solver
  for anisotropic and non-holonomic eikonal pdes. Image Processing On Line
  9:47--93

\bibitem[{Mirebeau et~al(2023)Mirebeau, Gayraud, Barr{\`e}re, Chen, and
  Desquilbet}]{mirebeau2023massively}
Mirebeau JM, Gayraud L, Barr{\`e}re R, Chen D, Desquilbet F (2023) Massively
  parallel computation of globally optimal shortest paths with curvature
  penalization. Concurrency Computat, Pract Exper 35(2):e7472

\bibitem[{Mumford(1994)}]{mumford1994elastica}
Mumford D (1994) Elastica and computer vision. in Algebraic Geometry and Its
  Applications (C.~L.~Bajaj,~Ed.), Springer-Verlag, New York,~pages~491-506

\bibitem[{Mumford and Shah(1989)}]{mumford1989optimal}
Mumford D, Shah J (1989) Optimal approximations by piecewise smooth functions
  and associated variational problems. Commun Pure Appl Math 42(5):577--685

\bibitem[{Nitzberg(1990)}]{nitzberg19902}
Nitzberg D M~Mumford (1990) {The 2.1-D Sketch*}. In: Proc. ICCV, pp 138--144

\bibitem[{Parent and Zucker(1989)}]{parent1989trace}
Parent P, Zucker SW (1989) Trace inference, curvature consistency, and curve
  detection. IEEE Trans Pattern Anal Mach Intell 11(8):823--839

\bibitem[{Parker et~al(2002)Parker, Wheeler-Kingshott, and
  Barker}]{parker2002estimating}
Parker GJ, Wheeler-Kingshott CA, Barker GJ (2002) Estimating distributed
  anatomical connectivity using fast marching methods and diffusion tensor
  imaging. IEEE Trans Med Imag 21(5):505--512

\bibitem[{P{\'e}chaud et~al(2009)P{\'e}chaud, Keriven, and
  Peyr{\'e}}]{pechaud2009extraction}
P{\'e}chaud M, Keriven R, Peyr{\'e} G (2009) Extraction of tubular structures
  over an orientation domain. In: Proc. CVPR, pp 336--342

\bibitem[{Peyr{\'e} et~al(2010)Peyr{\'e}, P{\'e}chaud, Keriven, and
  Cohen}]{peyre2010geodesic}
Peyr{\'e} G, P{\'e}chaud M, Keriven R, Cohen LD (2010) Geodesic methods in
  computer vision and graphics. Foundations and Trends in Computer Graphics and
  Vision 5(3--4):197--397

\bibitem[{Randers(1941)}]{randers1941asymmetrical}
Randers G (1941) On an asymmetrical metric in the four-space of general
  relativity. Phys Rev 59(2):195

\bibitem[{Rouchdy and Cohen(2013)}]{rouchdy2013geodesic}
Rouchdy Y, Cohen LD (2013) {Geodesic voting for the automatic extraction of
  tree structures. Methods and applications}. Comput Vis Image Underst
  117(10):1453--1467

\bibitem[{Saye and Sethian(2012)}]{saye2012analysis}
Saye R, Sethian JA (2012) {Analysis and applications of the Voronoi implicit
  interface method}. J Comput Phys 231(18):6051--6085

\bibitem[{Saye and Sethian(2011)}]{saye2011voronoi}
Saye RI, Sethian JA (2011) The voronoi implicit interface method for computing
  multiphase physics. Proc Nat Acad Sci 108(49):19,498--19,503

\bibitem[{Sethian(1996)}]{sethian1996fast}
Sethian JA (1996) A fast marching level set method for monotonically advancing
  fronts. Proc Nat Acad Sci 93(4):1591--1595

\bibitem[{Sethian(1999)}]{sethian1999fast}
Sethian JA (1999) Fast marching methods. SIAM Review 41(2):199--235

\bibitem[{Sethian and Vladimirsky(2001)}]{sethian2001ordered}
Sethian JA, Vladimirsky A (2001) {Ordered upwind methods for static
  Hamilton--Jacobi equations}. Proc Nat Acad Sci 98(20):11,069--11,074

\bibitem[{Sethian and Vladimirsky(2003)}]{sethian2003ordered}
Sethian JA, Vladimirsky A (2003) {Ordered upwind methods for static
  Hamilton--Jacobi equations: Theory and algorithms}. SIAM J Numer Anal
  41(1):325--363

\bibitem[{Siddiqi et~al(2002)Siddiqi, Bouix, Tannenbaum, and
  Zucker}]{siddiqi2002hamilton}
Siddiqi K, Bouix S, Tannenbaum A, Zucker SW (2002) Hamilton-jacobi skeletons.
  Int J Comput Vis 48(3):215--231

\bibitem[{T{\"u}retken et~al(2016)T{\"u}retken, Benmansour, Andres,
  G{\l}owacki, Pfister, and Fua}]{turetken2016reconstructing}
T{\"u}retken E, Benmansour F, Andres B, G{\l}owacki P, Pfister H, Fua P (2016)
  Reconstructing curvilinear networks using path classifiers and integer
  programming. IEEE Trans Pattern Anal Mach Intell 38(12):2515--2530

\bibitem[{Ulen et~al(2015)Ulen, Strandmark, and Kahl}]{ulen2015torsion}
Ulen J, Strandmark P, Kahl F (2015) Shortest paths with higher-order
  regularization. IEEE Trans Pattern Anal Mach Intell 37(12):2588--2600

\bibitem[{Wertheimer(1938)}]{wertheimer1938laws}
Wertheimer M (1938) Laws of organization in perceptual forms (partial
  translation). In: A sourcebook of Gestalt Psychology, Kegan Paul, Trench,
  Trubner \& Company, pp 71--88

\bibitem[{Zhang et~al(2016)Zhang, Dashtbozorg, Bekkers, Pluim, Duits, and ter
  Haar~Romeny}]{zhang2016robust}
Zhang J, Dashtbozorg B, Bekkers E, Pluim JP, Duits R, ter Haar~Romeny BM (2016)
  Robust retinal vessel segmentation via locally adaptive derivative frames in
  orientation scores. IEEE Trans Med Imag 35(12):2631--2644

\end{thebibliography}

\end{document}